\documentclass[12pt]{article}

\usepackage[utf8]{inputenc}
\usepackage[left=1in, right=1in, top=1in, bottom=1in]{geometry}
\usepackage{setspace}
\usepackage{amsmath}
\usepackage{amssymb}
\usepackage{amsthm}
\usepackage{graphicx}
\usepackage{tikz}
\usetikzlibrary{positioning, decorations.pathreplacing}
\usepackage{hyperref}
\usepackage{cite}
\usepackage{algorithm}
\usepackage{algpseudocode}
\usepackage{subcaption}

\newtheorem{theorem}{Theorem}
\newtheorem{lemma}[theorem]{Lemma}
\newtheorem{definition}{Definition}

\title{Cardinal Grid Slime Trail is PSPACE-Complete}
\author{Anne Pham \\ Matt Ferland \\[4pt] Department of Computer Science \\ Dickinson College \\ Carlisle, PA 17013}
\date{}

\begin{document}

\maketitle

\begin{abstract}
Slime Trail is a two-player combinatorial game in which the players alternately move a shared token to an adjacent vertex, permanently removing each vertex the token leaves, while attempting to reach a goal node. Ferland and Burke (2017) proved that Slime Trail is PSPACE-complete on arbitrary planar graphs and asked whether the same holds for the grid version actually used in play. We resolve this open problem by proving that \textsc{Cardinal Grid Slime Trail}, that is, Slime Trail on a square grid with four-directional movement, is PSPACE-complete. We adapt the QBF reduction of \cite{ferland2017} to the grid setting, designing grid-compatible gadgets that respect the degree-4 bound and the parity constraints of the integer lattice. We further show the construction extends, under a 45-degree rotation, to the eight-directional variant.
\end{abstract}

\section{Introduction}

Slime Trail is a two-player combinatorial game created by Bill Taylor in 1992. 
It has gained significant popularity, particularly in Portugal, where it has 
been featured in the annual National Mathematical Games Championship (CNJM) 
organized by Ludus since 2008. The competition currently draws approximately 
100,000 participants annually, with around 2,000 players competing in the 
championship finals \cite{ludus_cnjm}. The game is played on a grid where 
players alternate moving a token between squares, leaving a ``slime trail'' 
that makes previously visited vertices inaccessible.

The game belongs to the broader class of combinatorial games: two-player,
perfect information games in which players alternate moves with no random
elements \cite{albert2007}. These games have been extensively studied, with the field of combinatorial game theory dedicated to studying them \cite{fraenkel1996, berlekamp2001}. A detailed account of the rules of
Slime Trail, including the winning conditions and the reachability constraint
that governs legal moves, is provided in Section~\ref{sec:rules}.

As mentioned in \cite{ferland2017}, Slime Trail is 
PSPACE-complete when played on arbitrary planar graphs, via a reduction from 
the Quantified Boolean Formula (QBF) problem. However, their paper explicitly 
identified the grid version as an open problem: \textit{``Is Slime Trail still 
PSPACE-complete when played on a grid?''} Grid graphs impose stricter geometric 
constraints than arbitrary planar graphs, so it is not immediately obvious 
whether complexity results transfer between these two settings. This has 
particular relevance as the actual game of Slime Trail, as played in tournaments 
and practice, uses square grids rather than arbitrary graphs.

\subsection{Variant Studied in This Paper}

This paper studies \textit{Cardinal Grid Slime Trail}: Slime Trail played on 
a finite square grid with strictly four-directional (orthogonal) movement, 
where players may only move the token up, down, left, or right to an adjacent 
square. This differs from the original game described by Bill Taylor in 1992, 
which permits movement in all eight directions (including diagonals), and from 
the version played in Portuguese national tournaments organized by Ludus 
\cite{ludus_cnjm}, which also uses eight-directional movement. We note, however, that our construction is also compatible with the 
eight-directional variant with a simple transformation (see Section~\ref{sec:eight-directional} for a full argument).

\subsection{Our Contributions}

The primary contribution of this work is a proof that Cardinal Grid Slime Trail is PSPACE-complete. To achieve this, we develop grid-compatible gadget designs that preserve the logical behavior of their planar graph counterparts. We introduce a systematic approach to managing parity constraints through strategic dummy node placement and construct a complete polynomial-time reduction from QBF to Cardinal Grid Slime Trail.

\section{Game Rules}\label{sec:rules}

\subsection{Overview}

Slime Trail is played on a grid by two players, referred to as Blue and Red.
The board is a rectangular grid of squares, and two or more squares are designated as
goal squares: at least one belonging to Blue and at least one belonging to Red. In competition
play as organized by Ludus \cite{ludus_cnjm}, the goals are placed in
diagonally opposite corners of the board, giving both players an equally long
path to navigate \cite{mancala_slimetrail}. A single token is placed on any
mutually agreed starting square, and play begins with Blue moving first
\cite{ferland2017}. Figure~\ref{fig:game-example} shows one such instance of the Slime Trail game with 2 target squares and a starting position; in general, the board may contain more than two goal squares and the token may begin at any mutually agreed position.

\begin{figure}[htbp]
\centering
\begin{tikzpicture}[scale=1.0]

  \definecolor{boardyellow}{RGB}{255,193,37}
  \definecolor{slimegray}{RGB}{160,160,160}
  \definecolor{slimegreen}{RGB}{120,200,80}
  \definecolor{tokenblack}{RGB}{30,30,30}

  \foreach \col in {0,...,5} {
    \foreach \row in {0,...,4} {
      \fill[boardyellow] (\col,\row) rectangle (\col+1,\row+1);
    }
  }

  \fill[slimegray] (3,2) rectangle (4,3);  
  \fill[slimegray] (4,2) rectangle (5,3);  

  \fill[blue!80!black]  (5,4) rectangle (6,5);   
  \fill[red!80!black]   (0,0) rectangle (1,1);   

  \draw[black, line width=0.8pt]
    \foreach \col in {0,...,6} { (\col,0) -- (\col,5) }
    \foreach \row in {0,...,5} { (0,\row) -- (6,\row) };

  \draw[black, line width=2pt] (0,0) rectangle (6,5);

  \fill[tokenblack] (2.5,2.5) circle (0.28);

  \foreach \col/\row in {
    0/1, 0/2, 0/3, 0/4,
    1/0, 1/1, 1/2, 1/3, 1/4,
    2/0, 2/1,      2/3, 2/4,
    3/0, 3/1,      3/3, 3/4,
    4/0, 4/1,      4/3, 4/4,
    5/0, 5/1, 5/2, 5/3
  } {
    \node[font=\tiny, text=black!55] at (\col+0.5,\row+0.5) {$(\col,\row)$};
  }

  \node[font=\tiny\bfseries, text=white] at (5.5,4.5) {\shortstack{$B$ \\ $(5,4)$}};
  \node[font=\tiny\bfseries, text=white] at (0.5,0.5) {\shortstack{$R$ \\ $(0,0)$}};
  \node[font=\tiny,           text=white] at (3.5,2.5) {$(3,2)$};
  \node[font=\tiny,           text=white] at (4.5,2.5) {$(4,2)$};

  \draw[->, thick, slimegreen, line width=1.5pt]
    (2.5,2.5) -- (3.5,2.5);   
  \draw[->, thick, slimegreen, line width=1.5pt]
    (3.5,2.5) -- (4.5,2.5);   
  \draw[->, thick, slimegreen, line width=1.5pt]
    (4.5,2.5) -- (4.5,3.5);   

\end{tikzpicture}
\caption{%
  An example position in a game of Cardinal Grid Slime Trail on a $6 \times 5$ board.
  Squares are identified by zero-indexed coordinates $(\text{col}, \text{row})$
  with $(0,0)$ at the bottom-left.
  Blue's goal is square $(5,4)$ (top-right, blue) and Red's goal is square
  $(0,0)$ (bottom-left, red).
  The token (black circle) sits at its starting position $(2,2)$.
  Green arrows show three moves already made: $(2,2) \to (3,2) \to (4,2) \to (4,3)$.
  Gray squares ($(3,2)$ and $(4,2)$) have been vacated and are permanently
  inaccessible (``slimed'').}
\label{fig:game-example}
\end{figure}

\subsection{Movement}

On each turn, the current player moves the token to an orthogonally adjacent
square, that is, a square that shares an edge horizontally or vertically with
the token's current square. The square the token has just vacated is
immediately marked as slimed and permanently removed from play; neither player
may enter a slimed square for the remainder of the game \cite{ferland2017,
burke2017blog}. This growing trail of inaccessible squares is the defining
mechanic of the game, and it is what gives Slime Trail its name.

\subsection{Winning and Losing}

A player wins
by moving the token onto their own goal square: if Blue moves the token to
Blue's goal, Blue wins immediately, and likewise for Red. Notably, either
player may be the one to move the token onto a goal square: if Blue moves the
token onto Red's goal, then Red wins. Second, a player wins if the opponent
has no legal move on the opponent's turn. 


\subsection{The Reachability Constraint}

A subtle but important rule governs which moves are legal. A player is not
permitted to move the token to a square from which neither goal square is
reachable by any path on the remaining board \cite{ferland2017, burke2017blog}.
This rule prevents a player from deliberately trapping the token in a dead
pocket of the board from which the game could never be resolved.

As noted in \cite{ferland2017}, it is possible for a player
to make a move that eliminates all paths to their own goal while leaving a path
to the opponent's goal. In this case the move is still legal, but the opponent
automatically wins, since the only reachable goal now belongs to them. The
reachability constraint therefore requires that at least one goal square remain
reachable after every move, not necessarily one's own.


\subsection{Rule Variants}

The exact rules for Slime Trail have varied over time and across communities.
The original ruleset described by Bill Taylor in 1992 permitted draws, allowing
the token to be moved to a square from which no goal was reachable
\cite{mancala_slimetrail}. The no-draw reachability constraint described above
was introduced later and is now the standard used in competition and in the
complexity-theoretic literature \cite{ferland2017}. The game is typically played with adjacency in all
eight directions rather than four \cite{burke2017blog}.

\section{Preliminaries}

\subsection{Slime Trail}

We begin by formally defining both the general and grid-specific versions of Slime Trail.

\begin{definition}[Slime Trail \cite{ferland2017}]
A \textit{starting position} for Slime Trail consists of a connected graph $G = (V, E)$ with at least one vertex colored blue, at least one vertex colored red, and one vertex containing a moveable token. Two players, Blue and Red, alternate turns, with Blue moving first. On each turn, the current player moves the token to an adjacent vertex and removes the previous vertex from the graph. A player wins by moving the token onto a vertex of their own color, or if the opponent has no legal moves. The token may never be moved to a position from which neither color's goal vertex is reachable.
\end{definition}

The game as played in competition, however, is not defined on arbitrary graphs but on structured grids. Grid graphs impose geometric constraints absent from the general setting: most notably a degree bound of eight and a bipartite structure that governs path parity. These constraints make direct translation of results from arbitrary planar graphs nontrivial, and they motivate a careful separate treatment of the grid case. We therefore define the relevant graph class precisely.

\begin{definition}[Grid Graph]
A \textit{grid graph} $G = (V, E)$ is a graph where each vertex $v \in V$ 
corresponds to a lattice point $(x, y)$ with $x, y \in \mathbb{Z}$, two 
vertices $(x_1, y_1)$ and $(x_2, y_2)$ are adjacent if and only if 
$\max(|x_1 - x_2|, |y_1 - y_2|) = 1$, and every vertex has degree at most 8.
\end{definition}

With this definition in hand, we can state precisely the variant of Slime Trail that is the subject of our main result.

\begin{definition}[Cardinal Grid Slime Trail]
\textit{Cardinal Grid Slime Trail} is Slime Trail played on a finite grid graph with strictly orthogonal (four-directional) movement, where all vertices, edges, and goal nodes must align to the integer lattice, and two vertices are adjacent only if they differ by exactly 1 in exactly one coordinate.
\end{definition}

\subsection{Grid Constraints}

Grid graphs impose several constraints not present in arbitrary planar graphs.

The most immediate difference is the restriction on maximum degree. While 
vertices in a general planar graph may have arbitrarily high degree, every 
vertex in a grid graph has degree at most 8, since each vertex has at most 
eight neighbors: four orthogonal and four diagonal. In the cardinal variant, 
degree is further restricted to at most 4.

A subtler but equally important constraint concerns path lengths. Because the 
grid is bipartite with respect to orthogonal moves, path lengths between 
vertices are bounded below by their Chebyshev distance, and the parity of a 
path's length determines which player holds the turn upon arrival. A path with 
an even number of edges means the player who began the path arrives at the 
endpoint; an odd length path means the opponent arrives.

This parity constraint means that one cannot insert intermediate vertices 
arbitrarily without affecting turn order and grid alignment. When translating 
a planar graph gadget to the grid, dummy nodes must be inserted in even 
quantities and in geometrically valid positions to preserve correct parity. These constraints motivate the careful gadget designs 
presented in Section~\ref{sec:gadgets}.

\subsection{Computational Complexity Background}

We briefly recall the complexity theoretic concepts underlying our hardness proof. The central class of interest is PSPACE, which captures problems solvable with polynomially bounded memory \cite{papadimitriou1994}, as defined in Definition~\ref{def:pspace}.

\begin{definition}[PSPACE \cite{papadimitriou1994}]\label{def:pspace}
A decision problem is in PSPACE if it can be solved by a deterministic Turing 
machine using space polynomial in the input size.
\end{definition}

Among problems in PSPACE, those that are hardest in the sense that every other PSPACE problem reduces to them are called PSPACE-complete \cite{papadimitriou1994}. We make this notion precise as in Definition~\ref{def:pspace-complete}.

\begin{definition}[PSPACE-complete \cite{papadimitriou1994}]\label{def:pspace-complete}
A problem is PSPACE-complete if it is in PSPACE and every problem in PSPACE 
reduces to it in polynomial time.
\end{definition}

The problem we reduce from in our main result is the Quantified Boolean Formula (QBF) problem, which is PSPACE-complete \cite{papadimitriou1994, sipser2012}. Intuitively, QBF (Definition ~\ref{def:qbf}) generalizes SAT by allowing alternating existential and universal quantifiers over Boolean variables.

\begin{definition}[Quantified Boolean Formula (QBF) \cite{papadimitriou1994}]\label{def:qbf}
The QBF problem asks whether a fully quantified Boolean formula of the form
$$\exists\, x_1 \; \forall\, x_2 \; \exists\, x_3 \; \forall\, x_4 \; \ldots \; Q_n\, x_n \; : \; 
\phi(x_1, x_2, \ldots, x_n)$$
is true, where $\phi$ is a Boolean formula in conjunctive normal form (CNF),
odd-indexed variables are existentially quantified, even-indexed variables 
are universally quantified, and $Q_n \in \{\exists, \forall\}$ depending on 
the parity of $n$.
\end{definition}

\section{Main Result}

We now state and prove our main theorem. The proof proceeds in two parts: membership in PSPACE, which follows from a straightforward space analysis, and PSPACE-hardness, which is established via a reduction from QBF in Section~\ref{sec:reduction}.

\begin{theorem}
Cardinal Grid Slime Trail is PSPACE-complete.
\end{theorem}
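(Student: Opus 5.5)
We split the proof into membership and hardness, as announced before the statement.

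\emph{Membership in PSPACE.} Every move deletes the vertex the token leaves, so a game on a board with $N$ squares lasts at most $N$ moves. A position is described by the set of slimed squares, the token location and whose turn it is, which takes $O(N)$ bits. Each candidate move is checked for legality by a breadth-first search: the move must enter an unslimed orthogonal neighbour from which some goal square is still reachable in the remaining graph. This uses polynomial space. We then evaluate the game tree by depth-first minimax, storing only the current path of positions and the index of the move being tried at each level. The recursion depth is at most $N$ and each level needs $O(N)$ space, so the total space is $O(N^2)$. Terminal positions are those where a goal was entered or the player to move has no legal move, and these are decided directly by the rules.

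\emph{PSPACE-hardness.} We reduce from QBF (Definition~\ref{def:qbf}) by embedding the planar reduction of \cite{ferland2017} into the integer lattice. That reduction has three parts: a chain of variable gadgets, where the player owning the quantifier chooses one of two branches; a clause-checking phase, where one player picks a clause and the other picks a literal in it; and literal paths that lead back to a variable gadget, letting the token reach a goal exactly when the chosen literal was set true. We proceed in three steps.
\begin{enumerate}
\item First we make the source graph planar with maximum degree $3$. Any higher-degree vertex is replaced by a short binary tree of corridor vertices. Each replacement shifts path lengths, so we record the turn-parity of every edge of the original construction.
\item Next we draw this graph as an orthogonal grid embedding. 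Every edge becomes a one-square-wide corridor, and the corridors are separated by at least one empty (absent) square, so no unintended orthogonal adjacencies appear. The blow-up is polynomial, for instance $O(n+m)$ squares per gadget on an $O((n+m)^2)$ board.
\item Finally we repair parity. Wherever a corridor's length has the wrong parity relative to the original edge, we lengthen it with an even detour, or use a small odd-correcting bend in which a $2\times 2$ offset changes which bipartition class the endpoint lies in. We place each endpoint's lattice color deliberately so that required parities remain achievable.
\end{enumerate}

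Correctness is argued gadget by gadget. In a corridor no player has a choice, so the token simply passes through. The reachability rule must never let a player escape into a dead end the original construction did not allow; every deviation has to lead either to a forced loss or to a position equivalent to the planar one. Given this, the winner of the grid game equals the winner of the game in \cite{ferland2017}, which equals the truth value of the formula.

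The main obstacle is step 3, reconciling parity with the bipartiteness of the grid. Because the grid is bipartite, the parity of any path between two fixed squares is determined by their color classes. A corridor's parity therefore cannot be adjusted independently once its endpoints are placed. Since literal paths form cycles with the variable gadgets, the parity constraints must be satisfiable simultaneously around every cycle. We first try to check that every cycle in the source construction has even length, possibly after doubling edges into length-2 paths, which makes the parity constraints consistent. If some cycle is odd, we would instead redesign the clause or variable gadget locally, for example by adding a single dummy move that a player is forced to make, to realign turn order.
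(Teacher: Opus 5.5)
Your membership argument is fine: bounding the recursion depth by $N$ and checking legality with a reachability search gives polynomial space, as in the paper. The hardness half, however, has a genuine gap. The generic transformations you apply to Ferland and Burke's planar construction do not preserve the game, and the point where this would have to be fixed is deferred (``we would instead redesign the clause or variable gadget'').

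Replacing a high-degree vertex by a binary tree turns a single $d$-way decision by one player into a sequence of binary decisions. Your parity step only matches the parities of \emph{original} edges, so nothing forces the same player to be on move at every branch node of the tree. At the clause-selection vertex, the literal-picking player could therefore influence which clause is tested. Worse, the tree vertices off the token's route stay unslimed. After the token passes, the other neighbours of the original vertex remain connected to each other through these remnants, although those connections vanished in the source graph once the vertex was vacated. Such extra routes change both the available strategies and which moves the reachability rule permits.

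The paper avoids this with purpose-built gadgets. Red's fan-out is a spine $c_1 \to c_{11} \to c_{12} \to c_{13} \to c_2$ whose branch points are an even distance apart, so Red makes every branching decision. Fan-in to a variable gadget is a merge tree with a diode on every tip, so leftover branches cannot be traversed back toward the clauses.

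Your parity repair also cannot work as described. In the bipartite grid, the parity of every path between two placed squares is fixed by their colour classes, so no ``odd-correcting bend'' exists. Doubling an edge into a length-$2$ path, or inserting ``a single dummy move that a player is forced to make,'' flips who is on move at every vertex downstream of that edge. That alters the game rather than repairing it. Only odd-length subdivisions preserve turn order, and those preserve the parity of every cycle. So if the source construction contains an odd cycle (you have not checked that it does not), it has no turn-preserving grid embedding at all.

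What is missing is the substance of the grid reduction itself. You need a concrete family of gadgets laid out on one globally consistent turn-colouring, the 2-colouring the paper tracks with its coloured dots. You also need a verification that every deviation loses for the deviating player, including entering a literal path in the reverse direction, which the paper's diode handles. Your sentence that every deviation must lead to a forced loss or an equivalent position is the claim that needs proof, not a step of one.
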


\begin{proof}
We establish this in two parts: Cardinal Grid Slime Trail is in PSPACE (Lemma \ref{lem:pspace}), and it is PSPACE-hard by reduction from QBF (Section \ref{sec:reduction}).
\end{proof}

The membership direction is handled by Lemma~\ref{lem:pspace}.

\begin{lemma}\label{lem:pspace}
Cardinal Grid Slime Trail is in PSPACE.
\end{lemma}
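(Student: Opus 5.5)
We will prove membership by giving a depth-first game-tree evaluation that uses only polynomial space. The input is a finite grid graph $G=(V,E)$ with $N=|V|$ vertices, a set of blue and red goal vertices, and a token position. A game state is determined by three things: the set of slimed (removed) vertices, the token's current vertex, and which player is to move. Such a state can be written down in $O(N)$ bits. Each move permanently removes one vertex, so every play of the game lasts at most $N$ moves. The game tree therefore has depth at most $N$.

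We define a recursive procedure $\textsc{Win}(S)$ that returns whether the player to move in state $S$ has a winning strategy. The procedure runs as follows.
\begin{enumerate}
\item It enumerates the at most four orthogonal neighbours $w$ of the token that are not slimed.
\item For each $w$, it checks legality under the reachability constraint. It runs a breadth-first search from $w$ in the graph with the current vertex removed, and asks whether any goal vertex is reachable. This takes polynomial time and $O(N)$ space.
\item If $w$ is a goal of the mover, the mover wins. If $w$ is the opponent's goal, that move loses and is skipped.
\item Otherwise the procedure recurses on the successor state $S'$, and declares a win if $\textsc{Win}(S')$ is false.
\item If no legal move exists, or every legal move leads to a win for the opponent, it returns false.
\end{enumerate}
The answer to the instance is $\textsc{Win}$ applied to the starting state with Blue to move.

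For the space bound, the recursion stack holds at most $N+1$ frames. Each frame stores one state ($O(N)$ bits) and a loop counter over at most four moves. The BFS workspace is reused between calls, so it adds only $O(N)$. The total is $O(N^2)$ space, which is polynomial in the input size. Correctness follows by induction on the number of unslimed vertices: the recursion is exactly the standard minimax characterization of a finite, perfect-information, draw-free game.

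The one point needing care is termination and well-foundedness, and the argument is simple. Every move strictly shrinks the set of available vertices, so every branch terminates. The no-draw rule guarantees that every terminal state is a win for exactly one side, either by reaching a goal or because the opponent has no legal move. So $\textsc{Win}$ is well defined, and Cardinal Grid Slime Trail is in PSPACE. The same argument applies unchanged to general graphs and to eight-directional adjacency.
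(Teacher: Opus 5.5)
Your proposal is correct and follows essentially the same route as the paper: a depth-first minimax evaluation of a game tree of depth at most $N$, since every move slimes a vertex. The differences are cosmetic. You store a full $O(N)$-bit state in each frame, giving $O(N^2)$ space, whereas the paper claims $O(N)$ by keeping only the current path, from which the slimed set can be recovered. You also make the BFS check for the reachability rule explicit, which the paper leaves implicit.
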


\begin{proof}
The argument follows the same reasoning as for planar Slime Trail 
\cite{ferland2017}. Let $N$ denote the total number of vertices in the grid 
graph. Since each move permanently slimes one vertex, every branch of the 
game tree has depth at most $N$. A depth-first search over the game tree 
requires $O(N)$ space to maintain the current path and $O(4N) = O(N)$ space 
to track which vertices and their neighbors have been slimed, for a total of 
$O(N)$ space. Since $N$ is polynomial in the input size, Cardinal Grid Slime Trail 
is in PSPACE.
\end{proof}

The harder direction, PSPACE-hardness, requires constructing a reduction from QBF. We describe this reduction in full in Section~\ref{sec:reduction}.

\section{The QBF Reduction}\label{sec:reduction}

To prove PSPACE-hardness, we reduce from QBF to Cardinal Grid Slime Trail. Since QBF is PSPACE-complete, showing that Cardinal Grid Slime Trail is at least as hard as QBF is sufficient to establish PSPACE-hardness. Given a QBF instance
$$\exists\, x_1 \; \forall\, x_2 \; \exists\, x_3 \; \forall\, x_4 \; \ldots \; Q_n\, x_n \; : \; \phi(x_1, x_2, \ldots, x_n)$$
where $\phi$ is in CNF with $m$ clauses, we construct a Cardinal Grid Slime Trail game in which Blue has a winning strategy if and only if the QBF formula is true.

\subsection{Overview of the Reduction}

The reduction follows the general structure introduced in \cite{ferland2017}, adapted for grid constraints. As illustrated in 
Figure~\ref{fig:reduction_workflow}, our construction consists of three main 
phases that mirror the structure of the QBF formula.

The overall structure of the reduction is also related to the framework used 
in Geography-style reductions \cite{schaefer1978, lichtenstein1980}, in which 
a token traverses a graph whose structure encodes a formula's quantifier 
alternation. Like Geography, our gadgets chain variable-setting moves before 
a clause-checking phase.

Before describing the phases, we establish the correspondence between players 
and quantifiers. Blue acts as the \textit{true} player, controlling existentially 
quantified variables: Blue wins by finding a satisfying assignment. Red acts as 
the \textit{false} player, controlling universally quantified variables: Red wins 
by finding a clause that Blue cannot satisfy. Concretely, Blue controls all 
odd-indexed variables $x_1, x_3, x_5, \ldots$ and Red controls all even-indexed 
variables $x_2, x_4, x_6, \ldots$, reflecting the alternating quantifier 
structure $\exists\, x_1 \; \forall\, x_2 \; \exists\, x_3 \; \forall\, x_4 \; \ldots$ of the QBF 
formula. This correspondence is central to the correctness of the reduction.

\subsubsection{Phase 1: Variable Setting}

The game begins with a sequence of variable-setting steps corresponding to the 
variables $x_1, x_2, \ldots, x_n$ in the QBF formula, chained together so 
that each step feeds into the next. Blue controls the existentially quantified 
variables and Red controls the universally quantified ones, alternating 
throughout. Each player's choice is permanent: the direction not taken becomes 
inaccessible for the rest of the game, creating a lasting record of variable 
assignments that will be checked in Phase 3. The grid gadgets that enforce this 
behavior are described in Section~\ref{sec:gadgets}.

After all $n$ variables have been set, the token reaches a node that connects 
to the clause selection phase, beginning Phase 2.

\subsubsection{Phase 2: Clause Selection}

Red adversarially selects one clause $C_i$ from the $m$ clauses in $\phi$, 
choosing the clause that Red determines Blue cannot satisfy given the variable 
assignments from Phase 1. Red has complete freedom in this selection, and Blue 
has no influence over which clause is tested. The grid gadget that 
enforces this adversarial choice is described in Section~\ref{sec:choice}.

\subsubsection{Phase 3: Literal Verification}

Within the selected clause, Blue must find at least one literal in $C_i$ that 
evaluates to true under the variable assignments from Phase 1. Blue does this 
by checking each literal in turn: if a literal is true, Blue can follow that 
path and win; if all literals are false, Blue has no winning move and loses. 
The grid gadgets that implement this verification are described in 
Section~\ref{sec:gadgets}.

\subsubsection{Connecting the Phases to QBF Semantics}

The three-phase structure directly encodes QBF evaluation. Phase 1 mirrors the 
quantifier prefix: Blue chooses values for existentially quantified variables 
while Red chooses values for universal ones. Phase 2 
represents universal quantification over clauses: Red's free clause selection 
means Blue cannot win unless Blue can satisfy every clause Red might choose. 
Phase 3 represents existential quantification over literals: Blue need only 
find one true literal in the selected clause.

Blue has a winning strategy if and only if Blue can choose assignments for the 
existentially quantified variables such that, no matter how Red sets the 
universally quantified variables and no matter which clause Red selects, Blue 
can find a true literal in that clause. This is precisely the condition for the 
QBF formula to be true.

\begin{figure}[htbp]
    \centering
    \begin{tikzpicture}[
    scale=0.8,
    transform shape,
    node distance=0.8cm,
    every node/.style={circle, draw, thick, minimum size=0.6cm, font=\small, fill=white}
]

\coordinate (a0) at (0,0);
\coordinate (a1) at (-0.9,-1.2);
\coordinate (a1bar) at (0.9,-1.2);
\coordinate (a2) at (0,-2.1);

\coordinate (b0) at (0,-3.3);
\coordinate (b1) at (-0.9,-4.5);
\coordinate (b1bar) at (0.9,-4.5);
\coordinate (b2) at (0,-5.4);

\coordinate (dots1) at (0,-6.6);

\coordinate (c0) at (0,-7.8);
\coordinate (c1) at (-0.9,-9);
\coordinate (c1bar) at (0.9,-9);
\coordinate (c2) at (0,-9.9);

\coordinate (dots2) at (0,-11.1);

\coordinate (n0) at (0,-12.3);
\coordinate (n1) at (-0.9,-13.5);
\coordinate (n1bar) at (0.9,-13.5);
\coordinate (n2) at (0,-14.4);

\coordinate (choice_entry) at (0,-15.6);

\coordinate (clause1) at (-4.5,-17.1);
\coordinate (clause2) at (-1.5,-17.1);
\coordinate (clause3) at (1.5,-17.1);
\coordinate (clause_dots) at (3.5,-17.1);
\coordinate (clause_m) at (5.5,-17.1);

\coordinate (lit_left) at (-4.5,-18.6);
\coordinate (lit_mid) at (-1.5,-18.6);
\coordinate (lit_3) at (1.5,-18.6);
\coordinate (lit_right) at (5.5,-18.6);

\coordinate (lit_left2) at (-4.5,-20.1);
\coordinate (lit_mid2) at (-1.5,-20.1);
\coordinate (lit_32) at (1.5,-20.1);
\coordinate (lit_right2) at (5.5,-20.1);

\coordinate (vdots_left) at (-4.5,-21.3);
\coordinate (vdots_mid) at (-1.5,-21.3);
\coordinate (vdots_3) at (1.5,-21.3);
\coordinate (vdots_right) at (5.5,-21.3);

\draw[thick] (a0) -- (a1);
\draw[thick] (a0) -- (a1bar);
\draw[thick] (a1) -- (a2);
\draw[thick] (a1bar) -- (a2);

\draw[thick] (a2) -- (b0);
\draw[thick] (b0) -- (b1);
\draw[thick] (b0) -- (b1bar);
\draw[thick] (b1) -- (b2);
\draw[thick] (b1bar) -- (b2);

\draw[thick, dotted, line width=2pt] (b2) -- (dots1);

\draw[thick] (dots1) -- (c0);
\draw[thick] (c0) -- (c1);
\draw[thick] (c0) -- (c1bar);
\draw[thick] (c1) -- (c2);
\draw[thick] (c1bar) -- (c2);

\draw[thick, dotted, line width=2pt] (c2) -- (dots2);

\draw[thick] (dots2) -- (n0);
\draw[thick] (n0) -- (n1);
\draw[thick] (n0) -- (n1bar);
\draw[thick] (n1) -- (n2);
\draw[thick] (n1bar) -- (n2);

\draw[thick] (n2) -- (choice_entry);

\draw[thick, ->] (choice_entry) -- (clause1);
\draw[thick, ->] (choice_entry) -- (clause2);
\draw[thick, ->] (choice_entry) -- (clause3);
\draw[thick, ->] (choice_entry) -- (clause_m);

\draw[thick] (clause1) -- (lit_left);
\draw[thick] (clause2) -- (lit_mid);
\draw[thick] (clause3) -- (lit_3);
\draw[thick] (clause_m) -- (lit_right);

\draw[thick, dashed] (lit_left) .. controls (-6,-14) and (-4.5,-7) .. (b2);
\draw[thick, dashed] (lit_mid) .. controls (-2.5,-15) and (-1.5,-11) .. (c2);
\draw[thick, dashed] (lit_3) .. controls (4,-20) and (4,-3) .. (a2);
\draw[thick, dashed] (lit_right) .. controls (7,-15) and (6,-14.5) .. (n2);

\draw[thick] (lit_left) -- (lit_left2);
\draw[thick] (lit_mid) -- (lit_mid2);
\draw[thick] (lit_3) -- (lit_32);
\draw[thick] (lit_right) -- (lit_right2);

\draw[thick, dashed] (lit_left2) .. controls (-7,-17) and (-5,-9) .. (a2);
\draw[thick, dashed] (lit_mid2) .. controls (-3.5,-18) and (-2,-13.5) .. (n2);
\draw[thick, dashed] (lit_32) .. controls (3.5,-22) and (3,-5) .. (b2);
\draw[thick, dashed] (lit_right2) .. controls (8,-18) and (7,-11) .. (c2);

\node at (a0) {$a_0$};
\node at (a1) {$a_1$};
\node at (a1bar) {$\bar{a}_1$};
\node at (a2) {$a_2$};

\node at (b0) {$b_0$};
\node at (b1) {$b_1$};
\node at (b1bar) {$\bar{b}_1$};
\node at (b2) {$b_2$};

\node[draw=none, font=\large\bfseries, fill=none] at (dots1) {$\vdots$};

\node at (c0) {$c_0$};
\node at (c1) {$c_1$};
\node at (c1bar) {$\bar{c}_1$};
\node at (c2) {$c_2$};

\node[draw=none, font=\large\bfseries, fill=none] at (dots2) {$\vdots$};

\node at (n0) {$n_0$};
\node at (n1) {$n_1$};
\node at (n1bar) {$\bar{n}_1$};
\node at (n2) {$n_2$};

\node at (choice_entry) {};

\node at (clause1) {$C_1$};
\node at (clause2) {$C_2$};
\node at (clause3) {$C_3$};

\node[draw=none, font=\large\bfseries, fill=none] at (clause_dots) {$\cdots$};
\node at (clause_m) {$C_m$};

\node at (lit_left) {};
\node at (lit_mid) {};
\node at (lit_3) {};
\node at (lit_right) {};

\node at (lit_left2) {};
\node at (lit_mid2) {};
\node at (lit_32) {};
\node at (lit_right2) {};

\node[draw=none, font=\large\bfseries, fill=none] at (vdots_left) {$\vdots$};
\node[draw=none, font=\large\bfseries, fill=none] at (vdots_mid) {$\vdots$};
\node[draw=none, font=\large\bfseries, fill=none] at (vdots_3) {$\vdots$};
\node[draw=none, font=\large\bfseries, fill=none] at (vdots_right) {$\vdots$};

\draw[thick, decorate, decoration={brace, amplitude=6pt, mirror}]
    (-8.0, 0.3) -- (-8.0, -14.7);
\node[draw=none, fill=none, font=\normalsize\bfseries, anchor=east] at (-8.4,-7.2) {Phase 1};

\draw[thick, decorate, decoration={brace, amplitude=6pt, mirror}]
    (-8.0, -15.3) -- (-8.0, -18.3);
\node[draw=none, fill=none, font=\normalsize\bfseries, anchor=east] at (-8.4,-16.8) {Phase 2};

\draw[thick, decorate, decoration={brace, amplitude=6pt, mirror}]
    (-8.0, -18.6) -- (-8.0, -21.6);
\node[draw=none, fill=none, font=\normalsize\bfseries, anchor=east] at (-8.4,-20.1) {Phase 3};

\end{tikzpicture}
    \caption{Reduction workflow showing the three phases: Phase~1 (variable
    setting through gadgets), Phase~2 (clause selection by Red), and Phase~3
    (literal verification by Blue following dashed paths). Each diamond-shaped
    gadget encodes one QBF variable: the nodes are labeled with letters
    ($a_0, b_0, \ldots, n_0$) for typographic clarity, where the $k$-th gadget
    corresponds to variable $x_k$. For example, the gadget with nodes
    $a_0, a_1, \bar{a}_1, a_2$ encodes variable $x_1$, and the gadget with
    nodes $b_0, b_1, \bar{b}_1, b_2$ encodes variable $x_2$; the branch nodes
    (e.g., $a_1$ and $\bar{a}_1$) represent the true/false assignment choices
    for that variable. The high-level structure mirrors Geography-style
    reductions \cite{schaefer1978, lichtenstein1980}, with the key adaptation
    being the grid-compatible gadget designs.}
    \label{fig:reduction_workflow}
\end{figure}

\subsection{Correctness of the Reduction}

\subsubsection{Polynomial Time Construction}

The Cardinal Grid Slime Trail instance is constructible in polynomial time. Each of $n$ variables requires a variable gadget of constant size (See Section~\ref{sec:gadgets}). The choice gadget has size $O(m \cdot n)$ where $m$ is the number of clauses. Each path crossing requires one crossover gadget of constant size, and there are at most $O(m \cdot n)$ crossings. The total graph has $O(m \cdot n)$ vertices and edges, and the construction runs in $O(m \cdot n)$ time.

\subsubsection{Equivalence Proof}

With the construction established, we now prove that it correctly encodes the QBF formula.

\begin{lemma}
Blue has a winning strategy in the constructed Cardinal Grid Slime Trail game if and only if the QBF formula evaluates to true.
\end{lemma}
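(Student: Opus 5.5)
We prove both directions by a strategy-stealing correspondence between plays of the constructed game and evaluations of the QBF, organised around the three phases. The gadget behaviour, described in Section~\ref{sec:gadgets} and Section~\ref{sec:choice}, is assumed throughout. We rely on four properties of the gadgets.

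\begin{itemize}
\item[(i)] In the variable gadget for $x_k$ the player controlling $x_k$ (Blue for odd $k$, Red for even $k$) chooses the branch. The opponent cannot profitably deviate inside the gadget: any deviation either is illegal under the reachability constraint or hands the deviator's opponent an immediate goal.
\item[(ii)] Traversing a gadget leaves the unchosen branch node (say $\bar{a}_1$ after choosing $a_1$) unslimed. It also slimes the chosen one, so the literal path through a node is open exactly when that literal is true.
\item[(iii)] The choice gadget lets Red select any clause $C_i$, with Blue forced to follow.
\item[(iv)] Crossover gadgets let the token pass straight through without switching paths.
\end{itemize}

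We first fix a parity invariant: each gadget is entered with a specified player to move. Dummy nodes are inserted in even quantities between gadgets, which preserves this. So the $k$-th gadget is always entered with the correct player to move, and the clause gadget is entered with Red to move.

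\textbf{($\Rightarrow$)} Suppose the QBF is true, and fix a Skolem strategy $f$ choosing each existential $x_k$ from $x_1,\dots,x_{k-1}$. Blue plays $f$ in each odd gadget and follows the forced moves elsewhere. By (i), Red's only meaningful choices are the branches of the even gadgets, so the resulting path through Phase~1 encodes an assignment $\alpha$ satisfying $\phi$. In Phase~2 Red picks some $C_i$ by (iii). Since $\alpha\models C_i$, there is a true literal $\ell$ in $C_i$, and by (ii) its dashed path to the corresponding variable gadget is still open. Blue enters that path, and we check that the path leads Blue to a blue goal with correct parity. We also check that Red cannot divert the token along the path: at each degree-$3$ vertex on it, a diversion either leads only to a red-unreachable dead end, which is illegal, or, via the crossover property (iv), stays on the same path.

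\textbf{($\Leftarrow$)} Suppose the QBF is false. Symmetrically, Red has a counter-strategy $g$ for the universal variables, so that whatever Blue chooses, the resulting $\alpha$ falsifies some clause $C_i$. Red plays $g$ in the even gadgets and selects that $C_i$. Every literal of $C_i$ is now false, so by (ii) each literal path leads into a slimed node. The argument then needs Blue's remaining options to lose: entering a dead literal path becomes illegal or forces the token toward a red goal. We also need that skipping past all literals ends at Red's goal.

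\textbf{Main obstacle.} The delicate part is not the quantifier bookkeeping but justifying that no unintended moves exist, i.e.\ properties (i)--(iv) on the grid. Three kinds of unintended move must be ruled out: entering a literal path prematurely during Phase~1, backtracking through a partially traversed diamond, and exploiting the reachability constraint to force a win early. I would handle this with a case analysis over every vertex of degree at least $3$ in the grid gadgets. At each such vertex I would verify that every off-plan move either is illegal, because no goal is reachable, or lets the opponent reach their goal within one or two moves. The even-dummy-node parity invariant above is what makes these short wins go to the correct player.
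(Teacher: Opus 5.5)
Your skeleton matches the paper's: the same three-phase simulation. Handling the ``only if'' half through Red's counter-strategy when the QBF is false is just the contrapositive of the paper's direct argument. Your $(\Rightarrow)$/$(\Leftarrow)$ labels are swapped relative to the statement, but both halves are covered.

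The gap is that your list (i)--(iv) omits the property the paper's argument most depends on: the one-way (diode) behaviour of literal paths, including the diodes at the tips of merge gadgets. The local justification you substitute for it is false for this construction. In the ``QBF true'' half you claim that every diversion at a degree-3 vertex of a true literal's path is a goal-free dead end or is absorbed by a crossover. But the path runs through diodes and merge gadgets:
\begin{itemize}
\item In a diode, Red's move $\varepsilon_1\to\eta_1$ is a detour that rejoins at $\varepsilon_4$, and $\varepsilon_2$ and $\eta_3$ border red goals.
\item In a merge gadget, at junctions such as $\mu_0$ or $\mu_4$ the mover can turn backwards into another branch heading toward a different clause's literal node.
\end{itemize}
None of these is a dead end or a crossover. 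They are neutralised only by the diode's behaviour together with the parity at which each diode is entered.

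In the ``QBF false'' half, the fact you say the argument ``needs'' is exactly this same property: that entering a false literal's path is illegal or forces the token to a red goal. Without it, Blue could enter a false literal's path, turn backwards at a merge junction, and reach the still-unslimed column of some other clause $C_j$. From there Blue could walk to a true literal of $C_j$. The diode property is also what rules out premature entry into literal paths during Phase~1. Your (i) does not cover this when the deviator is the controlling player, e.g.\ Blue at $\alpha_2$ in an odd gadget.

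The verification criterion in your last paragraph is also too strong to be met. The diode punishes a reverse traversal, and the crossover punishes its deviations (see Case~1), only after long forced sequences, not within one or two moves. What you need is the non-local statement that each such deviation is losing for the deviator given the correct parity, which is how the paper's gadget sections phrase it. Add the diode/merge one-way property as a fifth assumption and cite it at these points. Your argument then goes through at the same level of abstraction as the paper's, and is somewhat cleaner in the ``only if'' direction.
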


\begin{proof}
($\Rightarrow$) Suppose Blue has a winning strategy. The variable gadgets force each controlling player to commit to an assignment: Blue for existentially quantified variables, Red for universally quantified ones. Blue's winning strategy through the choice gadget means that for every clause Red selects, Blue can identify a true literal, implying every clause is satisfied under some assignment consistent with the quantifier structure. Therefore, the QBF formula is true.

($\Leftarrow$) Suppose the QBF formula is true. Blue's winning strategy is as follows: in odd variable gadgets, Blue chooses assignments witnessing the existential quantifiers; Red's choices in even variable gadgets are unconstrained but correspond to some universal assignment. Since the formula is true, for every such universal assignment and every clause Red might select, at least one literal in that clause is true. Blue can therefore always follow the dashed path corresponding to a true literal and reach a Blue goal node.
\end{proof}

\section{Grid Gadget Implementations}\label{sec:gadgets}

This section describes the concrete grid constructions that implement each component of the reduction. Following standard complexity-theoretic terminology \cite{garey1979, hearn2009}, a \textit{gadget} is a small subgraph whose local structure enforces a specific logical behavior---such as a player committing to a binary choice, or movement being restricted to one direction---within the larger reduction graph. In each case, we show that the gadget fits on the integer lattice, respects the degree-4 constraint for cardinal movement, maintains correct parity at every decision point, and behaves logically as described in the overview above.

\subsection{Dummy Nodes}

A key element of our grid implementation is the strategic use of \textit{dummy 
nodes}, which are intermediate vertices that maintain proper spacing between functional 
nodes, enable parity adjustment without altering gadget logic, and support 
modular gadget design while respecting grid geometry. In all diagrams throughout 
this section, dummy nodes are represented as dashes along the edges connecting 
functional nodes. Since path length on a 
grid determines whose turn it is at any given node, each gadget must be 
designed so that the intended player holds the turn at every decision point. The gadgets described in the remainder of this 
section are all constructed with this constraint in mind.

\subsection{Odd Variable Gadget}
\label{sec:odd}

The odd variable gadget allows Blue to set odd-indexed variables $(x_1, x_3, x_5, \ldots)$. As shown in Figure~\ref{fig:odd-variable}, the gadget begins at ``Start,'' where it is Blue's turn. Blue may move left to $\alpha_1$ (representing the choice of setting the variable to false) or right to $\beta_1$ (representing the choice of setting it to true). Since both sides are symmetric, we describe only the left path.

\begin{figure}[htbp]
    \centering
    \begin{tikzpicture}[
        node distance=1.5cm,
        every node/.style={circle, draw, thick, minimum size=0.9cm, font=\large},
        goal/.style={circle, draw, thick, fill=blue!80, minimum size=0.9cm, font=\large},
        connection/.style={draw, thick},
        dashed connection/.style={draw, thick, dashed}
    ]
    
    \node[goal] (bluegoal1) at (0,0) {};
    \node (alpha3) at (1.5,0) {$\alpha_3$};
    \node (alpha2) at (3,0) {$\alpha_2$};
    \node (alpha1) at (4.5,0) {$\alpha_1$};
    \node (start) at (6,0) {start};
    \node (beta1) at (7.5,0) {$\beta_1$};
    \node (beta2) at (9,0) {$\beta_2$};
    \node (beta3) at (10.5,0) {$\beta_3$};
    \node[goal] (bluegoal2) at (12,0) {};
    
    \node (alpha4) at (1.5,-1.5) {$\alpha_4$};
    \node (beta4) at (10.5,-1.5) {$\beta_4$};
    
    \node (alpha5) at (1.5,-3) {$\alpha_5$};
    \node (alpha6) at (3,-3) {$\alpha_6$};
    \node (alpha7) at (4.5,-3) {$\alpha_7$};
    \node (out) at (6,-3) {Out};
    \node (beta7) at (7.5,-3) {$\beta_7$};
    \node (beta6) at (9,-3) {$\beta_6$};
    \node (beta5) at (10.5,-3) {$\beta_5$};
    
    \draw[connection] (bluegoal1) -- (alpha3);
    \draw[connection] (alpha3) -- (alpha2);
    \draw[connection] (alpha2) -- (alpha1);
    \draw[connection] (alpha1) -- (start);
    \draw[connection] (start) -- (beta1);
    \draw[connection] (beta1) -- (beta2);
    \draw[connection] (beta2) -- (beta3);
    \draw[connection] (beta3) -- (bluegoal2);
    
    \draw[connection] (alpha3) -- (alpha4);
    \draw[connection] (beta3) -- (beta4);
    
    \draw[connection] (alpha4) -- (alpha5);
    \draw[connection] (alpha5) -- (alpha6);
    \draw[connection] (alpha6) -- (alpha7);
    \draw[connection] (alpha7) -- (out);
    \draw[connection] (out) -- (beta7);
    \draw[connection] (beta7) -- (beta6);
    \draw[connection] (beta6) -- (beta5);
    \draw[connection] (beta5) -- (beta4);
    
    \draw[dashed connection] (alpha2) -- (3,1.2);
    \draw[dashed connection] (beta2) -- (9,1.2);
    \draw[dashed connection] (out) -- (6,-4.2);
    
    \node[draw=none, font=\normalsize\itshape] at (3,1.7) {From choice};
    \node[draw=none, font=\normalsize\itshape] at (9,1.7) {From choice};
    \node[draw=none, font=\normalsize\bfseries] at (4.5,-0.75) {False};
    \node[draw=none, font=\normalsize\bfseries] at (7.5,-0.75) {True};
    \node[draw=none, font=\normalsize\itshape] at (6,-4.7) {To next variable};
    
    \end{tikzpicture}
    \caption{The odd variable gadget for variables $x_1, x_3, x_5, \ldots, 
    x_{2n+1}$. At Start, it is Blue's turn to move. Blue moves left to set 
    the variable \textit{false} or right to set it \textit{true}.}
    \label{fig:odd-variable}
\end{figure}

After Blue moves to $\alpha_1$ (an odd-distance node from Start, so it is 
Red's turn upon arrival), Red must move to $\alpha_2$. Blue may now 
move to $\alpha_3$ or to the choice gadget. However, moving directly to the 
choice gadget results in an immediate loss for Blue via diode gadget 
(see Section~\ref{sec:diode}) enforces that traversal along verification 
paths flows only from the choice gadget (see Section ~\ref{sec:choice}) toward variable gadgets, not the
reverse. Blue, therefore, moves to $\alpha_3$.

From there, Red can then choose to move to $\alpha_4$ or the Blue target node 
to lose right away. If Red moves to $\alpha_4$, Blue must move to $\alpha_5$, 
Red must then move to $\alpha_6$, Blue must then move to $\alpha_7$ after 
which Red must move to Out. At Out, Blue faces another choice: proceed to the 
next gadget or move right to $\beta_7$. Moving to $\beta_7$ is losing for 
Blue.

Under optimal play by both sides, the game, therefore, proceeds from Out to the next gadget. One side of the gadget will be slimed and inaccessible, while the other remains reachable from the choice gadget.

\subsection{Even Variable Gadget}
\label{sec:even}

The even variable gadget allows Red to set even-indexed variables 
$(x_2, x_4, x_6, \ldots, x_{2n})$. Its structure mirrors that of the odd 
variable gadget, but with Red controlling the critical decision at ``Start'' 
(right to set true, or left to set false). The extra dummy nodes on each side 
of the gadget are necessary to account for the additional turn that elapses 
between gadgets, ensuring that it is Red's turn at ``Start'' rather than Blue's.

As shown in Figure~\ref{fig:even-variable}, the gadget begins at ``Start,'' 
where it is Red's turn. Red may move left to $\alpha_1$ (setting the variable 
to false) or right to $\beta_1$ (setting it to true). We describe the left 
path; the right path is symmetric.

\begin{figure}[htbp]
    \centering
    \resizebox{0.8\textwidth}{!}{
        \begin{tikzpicture}[
        node distance=1.0 cm,
        every node/.style={circle, draw, thick, minimum size=0.5cm, font=\large},
        goal/.style={circle, draw, thick, fill=blue!80, minimum size=0.9cm, font=\large},
        connection/.style={draw, thick},
        dashed connection/.style={draw, thick, dashed}
    ]
    
    \node[goal] (bluegoal1) at (0,0) {};
    \node (alpha4) at (1.5,0) {$\alpha_4$};
    \node (alpha3) at (3,0) {$\alpha_3$};
    \node (alpha2) at (4.5,0) {$\alpha_2$};
    \node (alpha1) at (6,0) {$\alpha_1$};
    \node (start) at (7.5,0) {start};
    \node (beta1) at (9,0) {$\beta_1$};
    \node (beta2) at (10.5,0) {$\beta_2$};
    \node (beta3) at (12,0) {$\beta_3$};
    \node (beta4) at (13.5,0) {$\beta_4$};
    \node[goal] (bluegoal2) at (15,0) {};
    
    \node (alpha5) at (1.5,-1.5) {$\alpha_5$};
    \node (beta5) at (13.5,-1.5) {$\beta_5$};
    
    \node (alpha6) at (1.5,-3) {$\alpha_6$};
    \node (beta6) at (13.5,-3) {$\beta_6$};
    
    \node (alpha7) at (1.5,-4.5) {$\alpha_7$};
    \node (alpha8) at (3,-4.5) {$\alpha_8$};
    \node (alpha9) at (4.5,-4.5) {$\alpha_9$};
    \node (alpha10) at (6,-4.5) {$\alpha_{10}$};
    \node (alpha11) at (7.5,-4.5) {$\alpha_{11}$};
    \node (beta10) at (9,-4.5) {$\beta_{10}$};
    \node (beta9) at (10.5,-4.5) {$\beta_9$};
    \node (beta8) at (12,-4.5) {$\beta_8$};
    \node (beta7) at (13.5,-4.5) {$\beta_7$};
    \node (alpha12) at (7.5,-6.0) {$\alpha_{12}$};
    \node (out) at (7.5,-7.5) {Out};
    
    \draw[connection] (bluegoal1) -- (alpha4);
    \draw[connection] (alpha4) -- (alpha3);
    \draw[connection] (alpha3) -- (alpha2);
    \draw[connection] (alpha2) -- (alpha1);
    \draw[connection] (alpha1) -- (start);
    \draw[connection] (start) -- (beta1);
    \draw[connection] (beta1) -- (beta2);
    \draw[connection] (beta2) -- (beta3);
    \draw[connection] (beta3) -- (beta4);
    \draw[connection] (beta4) -- (bluegoal2);
    
    \draw[connection] (alpha4) -- (alpha5);
    \draw[connection] (alpha5) -- (alpha6);
    \draw[connection] (alpha6) -- (alpha7);
    
    \draw[connection] (beta4) -- (beta5);
    \draw[connection] (beta5) -- (beta6);
    \draw[connection] (beta6) -- (beta7);
    \draw[connection] (alpha11) -- (alpha12);
    \draw[connection] (alpha12) -- (out);
    
    \draw[connection] (alpha7) -- (alpha8);
    \draw[connection] (alpha8) -- (alpha9);
    \draw[connection] (alpha9) -- (alpha10);
    \draw[connection] (alpha10) -- (alpha11);
    \draw[connection] (alpha11) -- (beta10);
    \draw[connection] (beta10) -- (beta9);
    \draw[connection] (beta9) -- (beta8);
    \draw[connection] (beta8) -- (beta7);
    
    \draw[dashed connection] (alpha3) -- (3,1.2);
    \draw[dashed connection] (beta3) -- (12,1.2);
    \draw[dashed connection] (out) -- (7.5,-9.0);
    
    \node[draw=none, font=\normalsize\itshape] at (3,1.7) {From choice};
    \node[draw=none, font=\normalsize\itshape] at (12,1.7) {From choice};
    \node[draw=none, font=\normalsize\bfseries] at (6,-0.75) {False};
    \node[draw=none, font=\normalsize\bfseries] at (9,-0.75) {True};
    \node[draw=none, font=\normalsize\itshape] at (7.5,-9.5) {To next variable};
    
    \end{tikzpicture}
    }
    \caption{The even variable gadget for variables $x_2, x_4, x_6, \ldots, 
    x_{2n}$. The starting position is Start, and it is Red's turn to move. 
    Red moves left to set the variable \textit{false} or right to set it 
    \textit{true}.}
    \label{fig:even-variable}
\end{figure}

After Red moves to $\alpha_1$ (one step from Start, so it is Blue's turn upon 
arrival), Blue must move to $\alpha_2$, and Red must move to $\alpha_3$ (two 
steps from $\alpha_1$, so it is Red's turn). Blue may now move to $\alpha_4$ 
or to the choice gadget. However, moving directly to the choice gadget results 
in an immediate loss for Blue: the diode gadget (described in 
Section~\ref{sec:diode}) enforces that traversal along verification 
paths flows only from the choice gadget toward variable gadgets, not the 
reverse. Attempting to enter the choice gadget from a variable gadget therefore 
traps Blue with no legal path forward. Blue therefore moves to $\alpha_4$. Red 
may then choose to move to $\alpha_5$ or to Blue's target node and lose right 
away.

If Red moves to $\alpha_5$, Blue must move to $\alpha_6$ (Blue's turn, one 
step from $\alpha_5$), Red moves to $\alpha_7$, Blue moves to $\alpha_8$, Red 
moves to $\alpha_9$, Blue moves to $\alpha_{10}$, and Red must then move to 
$\alpha_{11}$ --- each alternation following directly from the parity 
established at Start. From here, Blue moves to $\alpha_{12}$ to avoid being 
trapped, and Red must then move to Out. From Out, Blue proceeds to the next 
gadget.

As with the odd variable gadget, optimal play results in one side being slimed 
and the other remaining accessible for later verification.

\subsection{Choice Gadget}
\label{sec:choice}

\subsubsection{The Choice Gadget}

Figure~\ref{fig:choice-gadget} illustrates the choice gadget. The choice gadget serves as Red's adversarial testing mechanism. After all variables have been set in Phase 1, Red enters the choice gadget and strategically selects a single clause from the formula to force Blue to satisfy. This selection is entirely Red's choice, and Blue has no influence over which clause Red picks.


\begin{figure}[htbp]
    \centering
    \resizebox{1.05\textwidth}{!}{\input{choice-gadget.tex}}
     \caption{The choice gadget. The starting position is Start, and it is Red's turn to move.}
    \label{fig:choice-gadget}
\end{figure}

The gadget operates as follows: Red enters at ``Start'' and moves to $c_0$, then Blue moves to $c_1$. From $c_1$, Red can either move directly down to test the first clause, or move horizontally through intermediate nodes to reach any later clause. Once Red selects a clause $C_i$, Red is positioned at the first literal node $x_i$, and Blue must decide whether to follow the dashed connection back to the corresponding variable gadget to verify the literal's truth value, skip to the next literal in the clause, or concede by moving to Red's goal node.

The dashed connections from the clause back to the variable gadgets are crucial. Each dashed connection corresponds to a literal in the selected clause. During Phase 1, Blue set the variables and in doing so slimed one side of each variable gadget. When Blue follows a dashed connection, one of two outcomes occurs.

If the connection leads to an unslimed path in the variable gadget, the literal evaluates to true. Blue can traverse this path through a diode gadget (see Section~\ref{sec:diode}) and reach a Blue goal node, winning immediately.

If the connection leads to a slimed path, the literal evaluates to false. The path is blocked, and Blue must go to the Red target node from the diode gadget.

Red's strategy is to select a clause $C_i$ for which Red can determine that every literal evaluates to false given the variable assignments from Phase 1. Since the game has perfect information, Red can inspect the slimed paths and make this determination before committing to a clause. If Red can find such a clause, Blue has no winning move and must enter Red's goal node, losing the game.

Conversely, if Blue can find at least one true literal, Blue wins. Blue has a winning strategy in the overall game if and only if Blue can set the existential variables such that no matter which clause Red selects, at least one literal in that clause is true---precisely the condition for the QBF formula to evaluate to true.

\subsubsection{The Simplified Choice Gadget: Mathematical Structure}

To illustrate the core mechanics more clearly, we examine a simplified instance shown in Figure~\ref{fig:simplified-choice}. Consider the simplified gadget with two clauses $C_1$ and $C_2$, each containing two literals. Let the boolean formula be
\begin{equation}
\phi = C_1 \wedge C_2
\end{equation}
where
\begin{align}
C_1 &= (x_{11} \vee x_{12}) \\
C_2 &= (x_{21} \vee x_{22})
\end{align}
and $x_{11}, x_{12}, x_{21}, x_{22}$ are the four literals, where $x_{ij}$ denotes the $j$-th literal of clause $C_i$. Here $x_{11}$ and $x_{21}$ are existentially quantified (controlled by Blue) and $x_{12}$ and $x_{22}$ are universally quantified (controlled by Red).

\subsubsection{Red's Clause Selection and Its Adversarial Structure}

Red enters the gadget at ``Start'' and moves to $c_0$. Blue then moves to $c_1$. At this point, Red possesses complete freedom in clause selection. Since the game has perfect information, Red can determine which clause will be hardest for Blue to satisfy before committing to a choice. Red's decision is expressed formally as the choice of index $i \in \{1, 2\}$: Red can either move directly downward to test $C_1$ (if Red chooses $i = 1$), or move rightward through intermediate nodes $c_{11}, c_{12}, c_{13}$ and then downward to reach $C_2$ (if Red chooses $i = 2$).

Suppose Red selects $C_2$ by moving right and then positioning the token at the first literal node $x_{21}$ within that clause.

\subsubsection{Blue's Literal Verification and Truth Evaluation}

Blue now faces the necessity of responding to Red's clause selection. Blue examines the current literal $x_{21}$ in clause $C_2$ and must choose one of three actions:

(1) Follow the dashed connection back to the variable gadget corresponding to $x_{21}$ to verify its truth value.

(2) Skip to the next literal $x_{22}$ within the same clause.

(3) Concede by moving to Red's goal node, accepting immediate defeat.

When Blue follows the dashed connection for $x_{21}$, the outcome is determined entirely by the variable assignment phase. During Phase 1, when Blue set the variables, Blue simultaneously determined which paths in each variable gadget would become slimed (inaccessible) and which would remain unslimed (accessible).

Formally, for a literal $x_{ij}$, define the evaluation function:
\begin{equation}
\text{eval}(x_{ij}) = \begin{cases}
\text{true} & \text{if the path corresponding to } x_{ij} \text{ is unslimed} \\
\text{false} & \text{if the path corresponding to } x_{ij} \text{ is slimed}
\end{cases}
\end{equation}

If $\text{eval}(x_{21}) = \text{true}$, then the dashed connection leads to an unslimed section of the variable gadget. Blue can traverse this accessible path through a diode gadget and ultimately reach a Blue goal node, immediately winning the game.

If $\text{eval}(x_{21}) = \text{false}$, then the dashed connection leads to a slimed section of the variable gadget. The path is blocked and impassable. Blue cannot proceed further along this route and must return to the clause to attempt the next literal.

\subsubsection{Clause Satisfaction and Red's Winning Condition}

Consider the full clause $C_2 = (x_{21} \vee x_{22})$. For Blue to satisfy this clause, Blue must find at least one literal whose evaluation is true. Mathematically, this condition is expressed as:
\begin{equation}
\text{eval}(x_{21}) \vee \text{eval}(x_{22}) = \text{true}
\end{equation}

If both $\text{eval}(x_{21}) = \text{false}$ and $\text{eval}(x_{22}) = \text{false}$, then both dashed connections are slimed. Blue's only remaining option is to move to Red's goal node, resulting in an immediate loss for Blue and immediate victory for Red.

Red's adversarial strategy is therefore to select a clause in which all literals evaluate to false given the current variable assignments. In our example, Red would choose $C_2$ only if both $\text{eval}(x_{21}) = \text{false}$ and $\text{eval}(x_{22}) = \text{false}$. More generally, Red seeks a clause $C_i$ such that every literal in that clause evaluates to false. If Red can successfully force Blue into such a clause, Blue cannot satisfy it and Red wins.

\subsubsection{Blue's Winning Strategy and QBF Equivalence}

Blue's overall winning strategy in the game requires that Blue can set the existential variables during Phase 1 such that the following condition holds: for every clause $C_i$ that Red might select, at least one literal in $C_i$ evaluates to true. Formally, Blue wins if and only if Blue can assign values to the existentially quantified variables such that:
\begin{equation}
\forall\, i \in \{1, 2, \ldots, m\} : \bigvee_{j=1}^{k_i} \text{eval}(x_{ij}) = \text{true}
\end{equation}
where $m$ is the total number of clauses in the formula.

This condition is precisely equivalent to the statement that the boolean formula $\phi$ evaluates to true when all existentially quantified variables are set according to Blue's Phase 1 choices and all universally quantified variables are set according to Red's Phase 1 choices.

Therefore, Blue possesses a winning strategy in the constructed Cardinal Grid Slime Trail game if and only if the QBF formula $\exists\, x_1 \; \forall\, x_2 \; \cdots \; Q_n\, x_n \; : \; \phi(x_1, x_2, \ldots, x_n)$ evaluates to true. This establishes the reduction from QBF to Cardinal Grid Slime Trail and completes the PSPACE-hardness argument.

\begin{figure}[htbp]
    \centering
    \begin{tikzpicture}[
    node distance=1.3cm,
    every node/.style={circle, draw, thick, minimum size=0.7cm, font=\scriptsize},
    goal/.style={circle, draw, thick, fill=blue!80, minimum size=0.7cm, font=\scriptsize},
    redgoal/.style={circle, draw, thick, fill=red!80, minimum size=0.7cm, font=\scriptsize},
    connection/.style={draw, thick},
    dashed connection/.style={draw, thick, dashed}
]
\def\s{1.1}
\node (start) at (0, 6*\s) {Start};
\node (c0) at (0, 4.5*\s) {$c_0$};
\node (c1) at (0, 3*\s) {$c_1$};
\node (c11) at (\s, 3*\s) {$c_{11}$};
\node (c12) at (2*\s, 3*\s) {$c_{12}$};
\node (c13) at (3*\s, 3*\s) {$c_{13}$};
\node (c2) at (4*\s, 3*\s) {$c_2$};
\node (t11) at (0, 2*\s) {$t_{11}$};
\node (t12) at (0, \s) {$t_{12}$};
\node (x11) at (0, 0) {$x_{11}$};
\node[redgoal] (red_x11) at (-1.5*\s, 0) {};
\node (t13) at (0, -\s) {$t_{13}$};
\node (x12) at (0, -2*\s) {$x_{12}$};
\node[redgoal] (red_x12) at (-1.5*\s, -2*\s) {};
\node (t21) at (4*\s, 2*\s) {$t_{21}$};
\node (t22) at (4*\s, \s) {$t_{22}$};
\node (x21) at (4*\s, 0) {$x_{21}$};
\node[redgoal] (red_x21) at (2.5*\s, 0) {};
\node (t23) at (4*\s, -\s) {$t_{23}$};
\node (x22) at (4*\s, -2*\s) {$x_{22}$};
\node[redgoal] (red_x22) at (2.5*\s, -2*\s) {};
\draw[connection] (start) -- (c0);
\draw[connection] (c0) -- (c1);
\draw[connection] (c1) -- (c11);
\draw[connection] (c11) -- (c12);
\draw[connection] (c12) -- (c13);
\draw[connection] (c13) -- (c2);
\draw[connection] (c1) -- (t11);
\draw[connection] (t11) -- (t12);
\draw[connection] (t12) -- (x11);
\draw[connection] (x11) -- (t13);
\draw[connection] (t13) -- (x12);
\draw[connection] (c2) -- (t21);
\draw[connection] (t21) -- (t22);
\draw[connection] (t22) -- (x21);
\draw[connection] (x21) -- (t23);
\draw[connection] (t23) -- (x22);
\draw[connection] (red_x11) -- (x11);
\draw[connection] (red_x12) -- (x12);
\draw[connection] (red_x21) -- (x21);
\draw[connection] (red_x22) -- (x22);
\node[draw=none, font=\scriptsize] (to_x11) at (1.5*\s, 0) {to $x_{11}$};
\draw[dashed connection] (x11) -- (to_x11);
\node[draw=none, font=\scriptsize] (to_x12) at (1.5*\s, -2*\s) {to $x_{12}$};
\draw[dashed connection] (x12) -- (to_x12);
\node[draw=none, font=\scriptsize] (to_x21) at (5.5*\s, 0) {to $x_{21}$};
\draw[dashed connection] (x21) -- (to_x21);
\node[draw=none, font=\scriptsize] (to_x22) at (5.5*\s, -2*\s) {to $x_{22}$};
\draw[dashed connection] (x22) -- (to_x22);
\end{tikzpicture}
     \caption{A simplified choice gadget showing the clause selection mechanism with two clauses $C_1 = (x_{11} \vee x_{12})$ and $C_2 = (x_{21} \vee x_{22})$, each containing two literals.}
    \label{fig:simplified-choice}
\end{figure}

\subsection{Merge Gadget}

The choice gadget described above works straightforwardly when each variable 
appears at most once in the formula. However, a variable may appear in multiple 
clauses, requiring multiple connections from distinct clauses back to the same 
variable gadget. Since each vertex in a cardinal grid graph has degree at most 4, 
attaching multiple incoming paths requires care. The 
merge gadget resolves the fan-in problem by providing a structured way to 
combine these multiple incoming paths.

As shown in Figure~\ref{fig:side-by-side}, the merge gadget has the structure 
of a unary tree that grows with the number of times a variable appears. 
Figure~\ref{fig:merge-gadget} gives a concrete example for a variable appearing 
in three clauses: three separate paths lead to $\zeta_{10}$, $\zeta_{20}$, and 
$\zeta_{30}$, and all three converge at $\mu_4$, which then connects to the 
variable gadget via a single edge.

A critical requirement of the merge gadget is that each incoming path must 
only be traversable in one direction: from the choice gadget toward the 
variable gadget and never the reverse. To enforce this, a diode gadget 
(described in Section~\ref{sec:diode}) is placed at each tip of the 
merge gadget, that is, at each of the entry points$\zeta_{10}$, $\zeta_{20}$, and 
$\zeta_{30}$, 
and $\eta_{30}$. This ensures that all paths through the merge gadget flow 
strictly from the clauses toward the variable gadget, preserving the integrity 
of the reduction.

\begin{figure}[h!]
    \centering
    \begin{subfigure}[b]{0.5\textwidth}
    \centering
        \resizebox{1\textwidth}{!}{\begin{tikzpicture}[
    node distance=1.3cm,
    every node/.style={circle, draw, thick, minimum size=0.7cm, font=\scriptsize},
    connection/.style={draw, thick},
    dashed connection/.style={draw, thick, dashed}
]
\def\s{1.3}

\node (l2) at (0, -2*\s) {$\zeta_{23}$};
\node (l3) at (-\s, -2*\s) {$\zeta_{24}$};
\node (l4) at (-2*\s, -2*\s) {$\mu_0$};
\node (l5) at (-3*\s, -2*\s) {$\zeta_{14}$};
\node (l6) at (-4*\s, -2*\s) {$\zeta_{13}$};
\node (l7) at (-4*\s, -3*\s) {$\zeta_{12}$};
\node (l8) at (-4*\s, -4*\s) {$\zeta_{11}$};
\node (l9) at (-4*\s, -5*\s) {$\zeta_{10}$};

\node[draw=none, font=\normalsize\itshape, align=center] at (-4*\s, -6.7*\s) {From choice};
\draw[dashed connection] (l9) -- (-4*\s, -5.5*\s);

\node (d1) at (-2*\s, -\s) {$\mu_1$};
\node (d2) at (-2*\s, 0) {$\mu_2$};

\node (e1) at (-\s, 0) {$\mu_3$};
\node (e2) at (0, 0) {$\mu_4$};
\node (e3) at (\s, 0) {$\zeta_{36}$};
\node (e4) at (2*\s, 0) {$\zeta_{35}$};

\node[draw=none, font=\normalsize\itshape, align=center] at (0, 1.7*\s) {To variable};
\draw[dashed connection] (e2) -- (0, \s);

\node (z1) at (2*\s, -\s) {$\zeta_{34}$};
\node (z2) at (2*\s, -2*\s) {$\zeta_{33}$};
\node (z3) at (2*\s, -3*\s) {$\zeta_{32}$};
\node (z4) at (2*\s, -4*\s) {$\zeta_{31}$};
\node (z5) at (2*\s, -5*\s) {$\zeta_{30}$};

\node[draw=none, font=\normalsize\itshape, align=center] at (2*\s, -6.7*\s) {From choice};
\draw[dashed connection] (z5) -- (2*\s, -5.5*\s);

\node (g1) at (0, -3*\s) {$\zeta_{22}$};
\node (g2) at (0, -4*\s) {$\zeta_{21}$};
\node (g3) at (0, -5*\s) {$\zeta_{20}$};

\node[draw=none, font=\normalsize\itshape, align=center] at (0, -6.7*\s) {From choice};
\draw[dashed connection] (g3) -- (0, -5.5*\s);


\draw[connection] (l2) -- (l3) -- (l4) -- (l5) -- (l6) -- (l7) -- (l8) -- (l9);

\draw[connection] (l2) -- (g1) -- (g2) -- (g3);

\draw[connection] (d2) -- (d1) -- (l4);

\draw[connection] (d2) -- (e1) -- (e2) -- (e3) -- (e4);

\draw[connection] (e4) -- (z1) -- (z2) -- (z3) -- (z4) -- (z5);

\end{tikzpicture}}
        \caption{Merge gadget}
        \label{fig:merge-gadget}
    \end{subfigure}
    \hfill
    \begin{subfigure}[b]{0.45\textwidth}
        \centering
        \includegraphics[width=\textwidth]{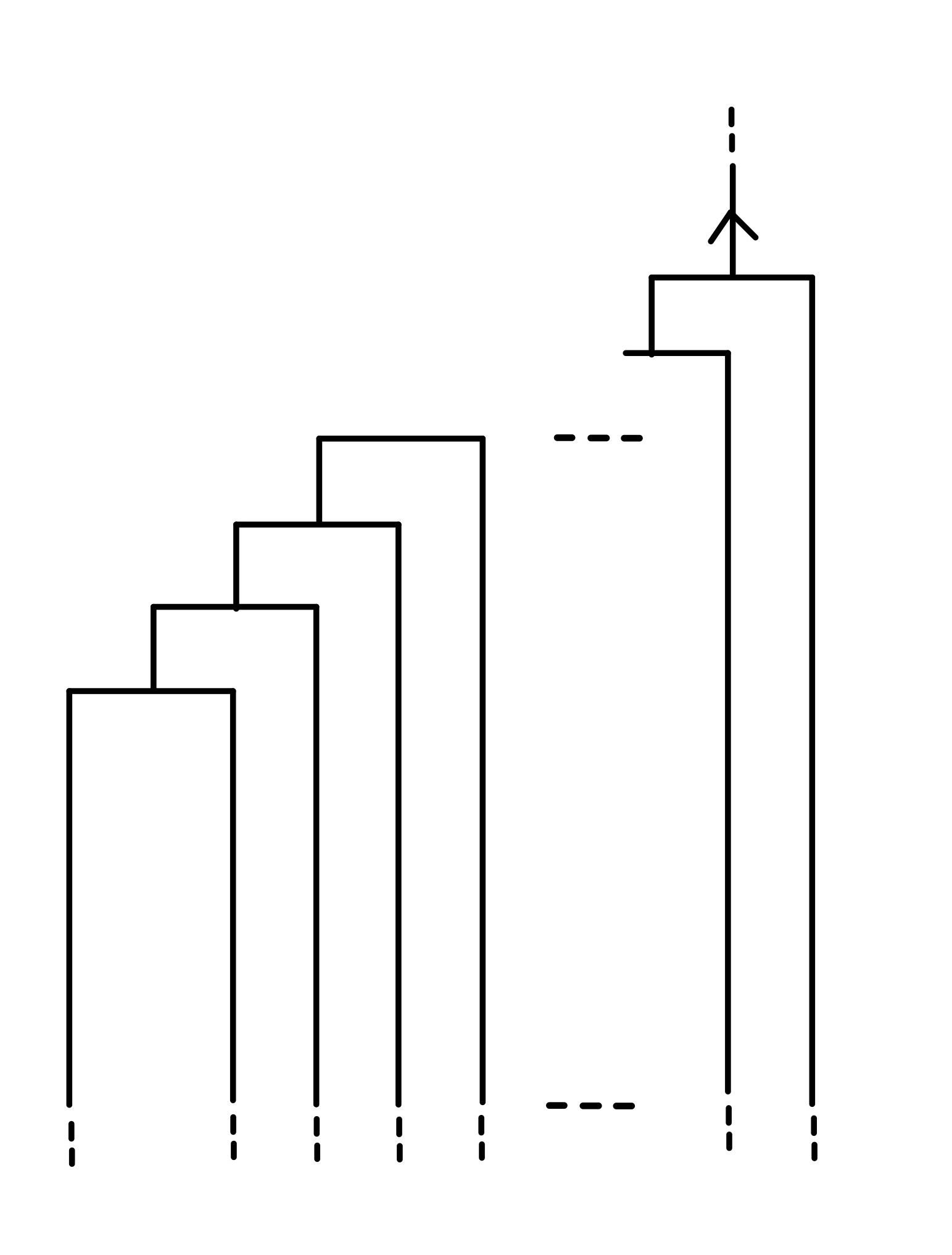}
        \caption{Merge analogy}
        \label{fig:merge-analogy}
    \end{subfigure}
    \caption{Merge gadget (a) and its unary tree analogy (b).}
    \label{fig:side-by-side}
\end{figure}

\subsection{Diode Gadget}
\label{sec:diode}

The diode gadget, shown in Figure~\ref{fig:diode-gadget}, 
enforces that movement along any directed connection flows in only one intended 
direction. The name comes by analogy with the electrical diode, a component 
that permits current to flow in only one direction \cite{sedra2004}. It is used 
in multiple places throughout the reduction---between 
clauses and variable gadgets, and at the tips of merge gadgets---to prevent 
players from exploiting connections to travel backward and reach unintended 
parts of the graph.

\begin{figure}[htbp]
    \centering
\begin{tikzpicture}[
    node distance=1.3cm,
    every node/.style={circle, draw, thick, minimum size=0.7cm, font=\scriptsize},
    goal/.style={circle, draw, thick, fill=blue!80, minimum size=0.7cm, font=\scriptsize},
    redgoal/.style={circle, draw, thick, fill=red!80, minimum size=0.7cm, font=\scriptsize},
    connection/.style={draw, thick},
    dashed connection/.style={draw, thick, dashed}
]
\def\s{1.3}
\node (epsilon0) at (7*\s, -17*\s) {$\varepsilon_0$};
\node (startepsilon) at (7*\s, -18*\s) {$start_\varepsilon$};
\draw[dashed connection] (startepsilon) -- (7*\s, -18.5*\s);

\node (epsilon1) at (7*\s, -16*\s) {$\varepsilon_1$};
\node (epsilon2) at (7*\s, -15*\s) {$\varepsilon_2$};
\node (epsilon3) at (7*\s, -14*\s) {$\varepsilon_3$};
\node (epsilon4) at (7*\s, -13*\s) {$\varepsilon_4$};
\node (epsilon5) at (7*\s, -12*\s) {$\varepsilon_5$};

\draw[dashed connection] (epsilon5) -- (7*\s, -11.5*\s);
\node (eta1) at (8*\s, -16*\s) {$\eta_1$};
\node (eta2) at (9*\s, -16*\s) {$\eta_2$};
\node (eta3) at (9*\s, -15*\s) {$\eta_3$};
\node (eta4) at (9*\s, -14*\s) {$\eta_4$};
\node (eta5) at (9*\s, -13*\s) {$\eta_5$};
\node (eta6) at (8*\s, -13*\s) {$\eta_6$};
\node[redgoal] (red1) at (6*\s, -15*\s) {};
\node[redgoal] (red2) at (10*\s, -15*\s) {};
\draw[connection] (startepsilon) -- (epsilon0) -- (epsilon1) -- (epsilon2) -- (epsilon3) -- (epsilon4) -- (epsilon5);
\draw[connection] (epsilon1) -- (eta1) -- (eta2);
\draw[connection] (eta2) -- (eta3) -- (eta4) -- (eta5);
\draw[connection] (eta5) -- (eta6) -- (epsilon4);
\draw[connection] (epsilon2) -- (red1);
\draw[connection] (eta3) -- (red2);
\end{tikzpicture}
    \caption{The diode gadget ensures that traversal flows in only one 
    direction, preventing players from exploiting connections to reach unintended 
    parts of the graph.}
    \label{fig:diode-gadget}
\end{figure}

The correct path runs from $start_{\varepsilon}$ to $\varepsilon_5$. Blue starts at $start_{\varepsilon}$, 
Red moves to $\varepsilon_0$, Blue moves to $\varepsilon_1$, and optimal play 
then proceeds $\varepsilon_2 \to \varepsilon_3 \to \varepsilon_4 \to 
\varepsilon_5 \to$ exit.

If Red deviates at $\varepsilon_1$ by moving to $\eta_1$, Blue moves to 
$\eta_2$, Red to $\eta_3$, and Blue either concedes immediately or moves to 
$\eta_4$. Continuing along this detour leads back to $\varepsilon_4$, from 
which Blue moves to $\varepsilon_5$ and Red proceeds to the exit as intended. 
The deviation thus accomplishes nothing.

Consider the case when Blue attempts the reverse journey, starting from the 
exit and trying to reach the entry. The gadget forces Blue into a losing 
position. From the exit, Blue moves to $\varepsilon_5$, Red to $\varepsilon_4$. 
If Blue moves to $\eta_6$, the detour through $\eta_5 \to \eta_4 \to \eta_3 
\to \eta_2 \to \eta_1 \to \varepsilon_1 \to \varepsilon_2$ leads to Blue 
eventually being forced onto Red's goal node and losing. Similarly, if Blue 
moves to $\varepsilon_3$ from $\varepsilon_4$, Red responds with $\varepsilon_2$, 
then Blue moves to $\varepsilon_1$, and Red creates a trap by going to $\eta_1$, 
again forcing Blue onto Red's goal node. In every case, attempting to traverse 
the gadget in the reverse direction leads to a loss for Blue.

\subsection{Crossover Gadget}

As we route dashed connections from clauses back to variable gadgets, 
some of these paths will inevitably cross one another on the grid. A direct 
intersection would allow players to make unintended moves, potentially winning 
through routes that have no logical meaning in the reduction. The crossover 
gadget, shown in Figure~\ref{fig:crossover}, resolves these intersections while 
preserving planarity, correct turn order, and the integrity of both crossing 
paths. Figure~\ref{fig:crossover} is the complete and sufficient description of 
the gadget itself. The colored dots in Figure~\ref{fig:crossoverdots} are provided solely as a 
visual aid: they are not part of the 
gadget but are added to make it easier to track turn order and follow the case 
analysis in the sections that follow.

\begin{figure}[p]
    \centering
    \resizebox{1.0\textwidth}{!}{\input{cross-over.tex}}
    \caption{The crossover gadget. Two entry points ($\mathit{start}_{\varepsilon
    }$ and $\mathit{start}_{\omega}$) 
    and two exit points allow paths to cross without actual intersection while 
    maintaining correct parity.}
    \label{fig:crossover}
\end{figure}

\begin{figure}[p]
    \centering
    \resizebox{1.0\textwidth}{!}{\input{cross-over-dots.tex}}
    \caption{The crossover gadget with colored dots indicating turn order. 
    Every node carries a small dot in its upper-right corner: a blue dot means 
    it is Red's turn to move away from that node, and a red dot means it is 
    Blue's turn to move away from it. This figure serves as a reference for 
    following the case analysis.}
    \label{fig:crossoverdots}
\end{figure}

A legal traversal of the crossover gadget goes straight through: a path 
entering from $\mathit{start}_{\varepsilon}$ exits at the far end of the same logical line, and a 
path entering from $\mathit{start}_{\omega}$ exits at the far end of the other. The two paths 
share a central ``Cross'' node but are designed so that any deviation from the 
straight-through route leads to a loss for the deviating player.

\textbf{Notation for colored dots.} As shown in Figure~\ref{fig:crossoverdots}, 
every node in the crossover gadget carries a small colored dot in its 
upper-right corner. These dots use a proper 2-coloring of the graph, meaning 
every pair of adjacent nodes receives opposite-colored dots. A blue dot means 
it is Red's turn to move away from that node, and a red dot means it is Blue's 
turn to move away from it. This coloring makes it straightforward to trace 
whose turn it is at any point in the analysis without manually counting steps 
from the start. The colored dots are preserved in all subsequent case analysis 
figures.

\textbf{Notation for shaded paths.} In addition to the colored dots, the case 
analysis figures use shaded paths to illustrate the possible responses available 
to the opposing player after an illegal move is made. The shaded paths and 
their colors are purely for explanation purposes and are not part of the gadget 
itself. Each distinct shaded color represents a different response strategy the 
opposing player may take. In cases where multiple colored paths are shown, each 
path represents an independent scenario branching from the point where the 
opposing player makes a decision. All shaded paths lead to the same outcome: 
the deviating player loses, regardless of which response the opposing player 
chooses.

There are 18 possible illegal moves in total, initiated by either Red or Blue. 
For each illegal move, the opposing player has at least one response that either 
wins the game outright or steers the deviating player back onto the correct 
path. The complete case analysis demonstrated by 18 figures with dots and shaded paths is provided in the Appendix. We present one 
representative case here, which is shown in Figure~\ref{fig:case-1}.

\subsubsection{Example: Blue Deviating at $\varepsilon_8$ (Case 1)}

\begin{figure}[p]
    \centering
    \includegraphics[width=1\textwidth]{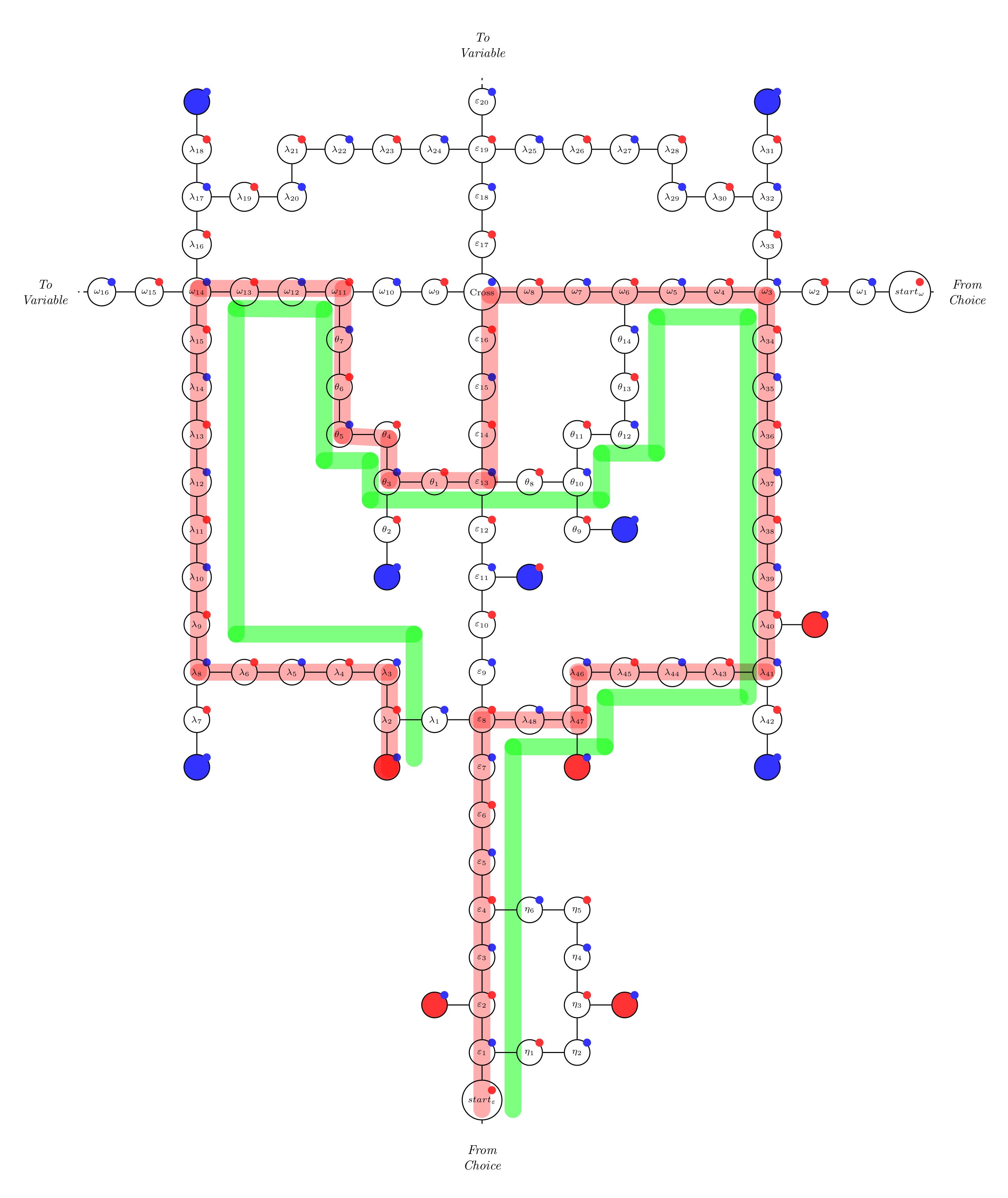}
    \caption{Case 1: Blue deviating at $\varepsilon_8$ by going to the right. The green and red shaded paths show two possible responses Red can take; both result in Blue losing.}
    \label{fig:case-1}
\end{figure}

We now examine the scenario in which Blue deviates at $\varepsilon_8$. Under normal play, $\varepsilon_8$ carries a Red dot, indicating it is Blue's turn to move \emph{away} from this node. 

The scenario is as follows. Red starts at $\mathit{start}_{\varepsilon}$ from the choice gadget, and it is Blue's turn to move. Blue moves $\varepsilon_1 \to \varepsilon_2 \to \varepsilon_3 \to \cdots$ until Red moves to $\varepsilon_8$ and it is Blue's turn. Instead of the legal move to $\varepsilon_9$, Blue moves to $\lambda_{48}$, which is an illegal deviation. Note that $\lambda_{48}$ carries a blue dot, confirming it is a node where Red moves away.

Red must now respond. Red moves to $\lambda_{47}$. Blue, not wishing to lose immediately, avoids the Red goal node below and moves to $\lambda_{46}$. The players alternate along the bottom horizontal path until Blue moves to $\lambda_{41}$. Red then moves up to $\lambda_{40}$ (which carries a red dot, so it is Blue's turn there), and Blue avoids the Red goal node by moving to $\lambda_{39}$. The players continue along the right-side vertical path to $\omega_3$, where it is Red's turn to move. Red moves left to $\omega_4$, Blue to $\omega_5$, Red to $\omega_6$. At this point the path diverges into two branches shown in Figure~\ref{fig:case-1}.

Following the red path: from $\omega_6$, Blue moves to $\omega_7$, Red to $\omega_8$, and Blue to Cross. Red descends to $\varepsilon_{16}$ and the players alternate down through the vertical spine until Blue is at $\varepsilon_{13}$. Red moves to $\theta_1$, Blue to $\theta_3$, and Red to $\theta_4$ (Red avoids the Blue goal node). They continue until Red reaches $\omega_{11}$. Blue moves to $\omega_{12}$ (choosing not to go to $\omega_{10}$, which would cut off access to goal nodes), then Red to $\omega_{13}$, Blue to $\omega_{14}$, Red to $\lambda_{15}$, and so on until Blue is at $\lambda_8$. From there, Red moves to $\lambda_6$, Blue to $\lambda_5$, Red to $\lambda_4$, Blue to $\lambda_3$, Red to $\lambda_2$, and Blue has no legal move left except Red's goal node.

Following the green path: from $\omega_6$, Blue moves to $\theta_{14}$, Red to $\theta_{13}$, Blue to $\theta_{12}$, Red to $\theta_{11}$, Blue to $\theta_{10}$.  From there, Red avoids $\theta_9$ (which would be a loss) and moves to $\theta_8$. Blue moves to $\varepsilon_{13}$, Red to $\theta_1$, Blue to $\theta_3$, Red to $\theta_4$, and the play continues until Red reaches $\omega_{11}$. From here the game proceeds identically to the red path, again forcing Blue onto Red's goal node.

Both branches of Red's response result in Blue losing, confirming that Blue's illegal deviation at $\varepsilon_8$ is successfully punished.

\section{Extension to Eight-Directional Movement}\label{sec:eight-directional}

An issue arises when extending our construction from the four-directional 
to the eight-directional setting: simply allowing diagonal moves on the same 
grid embedding used in Section~\ref{sec:gadgets} breaks the construction, 
since it lets a player bypass the very structure that enforces correct game 
logic. Consider the diode gadget shown in Figure~\ref{fig:diode-gadget}. Under 
cardinal (four-directional) movement, play is forced along the sequence 
$\mathit{start}_{\varepsilon} \to \varepsilon_0 \to \varepsilon_1 \to 
\varepsilon_2 \to \varepsilon_3$, with Blue starting, Red moving to 
$\varepsilon_0$, Blue to $\varepsilon_1$, Red to $\varepsilon_2$, and Blue to 
$\varepsilon_3$. If diagonal moves are permitted on this same embedding, 
however, Red can instead move directly from $\varepsilon_3$ to Red's goal 
node, which lies diagonally adjacent to it, short-circuiting the gadget and 
winning in a single move that the case analysis of 
Section~\ref{sec:gadgets} never accounts for. Establishing PSPACE-completeness for 
the eight-directional variant therefore requires an embedding in which such shortcuts are not available anywhere in the construction.

Noticing that the gadgets described in 
Section~\ref{sec:gadgets} are all designed using only orthogonal connections, we resolve this by transforming the pattern such that all orthogonal edges become diagonal edges, and all the nodes can be reached by the exact same neighboring nodes as before rotation.

To illustrate this concretely, Figure~\ref{fig:eight-dots-rotation}, on the left, shows a 
pattern of eight dots landing on eight squares. Each dot can be reached by any neighboring dots using either a horizontal or vertical path. By rotating those same eight dots 45 degrees and enlarging the distance slightly by $\sqrt{2}$ (the length of the diagonal of a square), while keeping the original grid 
intact, we have the figure on the right. This creates a new pattern in which one node can only be reached by the other using a diagonal path. Thus, we can embed the orthogonal game as one with only diagonal moves available, so the 8-directional variant will have no ways to ``cheat'' on it. This rotate-and-scale technique is analogous to the one used by  \cite{brunner2023} for chess for similar purposes.


\begin{figure}[h!]
    \centering
    \includegraphics[width=1\linewidth]{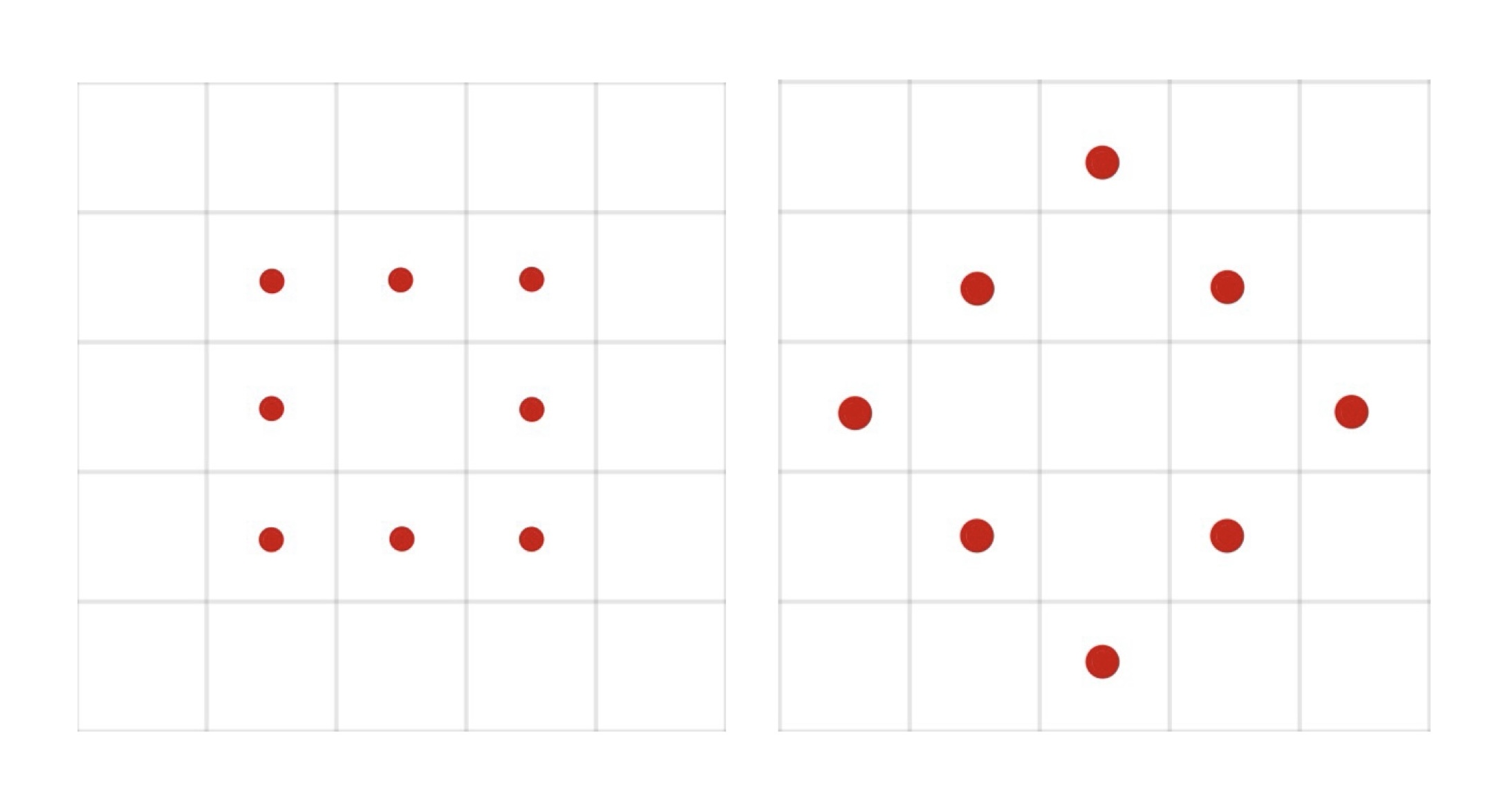}
    \caption{An eight-vertex pattern before (left) and after (right) a 
    45-degree rotation and slight enlargement. Every orthogonal edge in the 
    original pattern becomes a diagonal edge in the rotated pattern, and the 
    connected dots remain reachable from one another, while no new diagonal 
    shortcuts are introduced between previously unconnected dots.}
    \label{fig:eight-dots-rotation}
\end{figure}

We now write this out carefully. Let $G=(V,E)$ be the graph from 
Section~\ref{sec:gadgets}, drawn on $\mathbb{Z}^2$ so that two nodes $u,v$ are 
connected exactly when they are one step apart horizontally or vertically 
(distance $1$). If we have a vertex $v$ at $(x,y)$ then after the transformation $T$ we get the coordinates:
\[
T(v) = (x-y,\ x+y).
\]
This map rotates everything by 45 degrees and stretches it by a factor of 
$\sqrt2$. Let $V' = T(V)$ be the new set of node positions, with all unused squares becoming pre-slimed vertices, and build a new graph $G' = (V', E')$, where $E'$ is all possible 8 directional connections.

\begin{theorem}
Let $F(G)$ be the function that outputs which player wins the 4 directional game, and similarly $E(G)$ be the function for the 8 directional version. Then, $F(G) = E(G')$.
\end{theorem}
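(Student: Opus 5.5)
The plan is to show that $T$ induces a graph isomorphism between $G$ and $G'$ that preserves the token position and the goal colors. Once that is in place, the two games are literally the same game, so $F(G)=E(G')$ follows immediately. The argument therefore reduces to checking that the edge set $E'$, defined geometrically via eight-directional adjacency among the points of $V'$, coincides with $\{T(u)T(v) : uv\in E\}$.

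\textbf{Step 1 (injectivity).} The map $T(x,y)=(x-y,x+y)$ is a linear map with determinant $2\neq 0$, so $T$ is injective and $T:V\to V'$ is a bijection. The image lies in the sublattice $\{(a,b): a\equiv b \pmod 2\}$.

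\textbf{Step 2 (edges are preserved).} If $uv\in E$, then $v-u\in\{(\pm1,0),(0,\pm1)\}$. Applying $T$ gives
\[
T(v)-T(u)\in\{(1,1),(-1,-1),(-1,1),(1,-1)\},
\]
which are exactly the diagonal steps. So $T(u)T(v)\in E'$.

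\textbf{Step 3 (no new edges).} Conversely, suppose $T(u),T(v)\in V'$ are eight-adjacent. Their difference $(a,b)$ has $\max(|a|,|b|)=1$. It must also lie in the parity sublattice, since $T(v-u)$ does. This excludes the orthogonal steps $(\pm1,0)$ and $(0,\pm1)$, leaving only the diagonal steps. Inverting $T$ on these gives $v-u\in\{(\pm1,0),(0,\pm1)\}$. The original graph's edges are exactly the orthogonal unit steps, so $uv\in E$. Hence $T$ is a graph isomorphism $G\cong G'$.

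\textbf{Step 4 (the rules agree).} Pre-slimed squares are not vertices, so the token can never enter them and they play no role. We carry the token's start vertex and the blue and red goal colors across via $T$. The rules of Slime Trail are stated purely in terms of the graph: adjacency, removing the vacated vertex, reaching a goal, having no legal move, and the reachability constraint, which is itself a graph-connectivity condition. We then argue by induction on the number of remaining vertices that positions correspond under $T$, including which player is to move. The base cases are terminal positions: a win by reaching a goal, or a loss by having no legal move. Legal moves correspond bijectively, since adjacency is preserved by Step 2 and Step 3 and reachability is preserved by the isomorphism. It follows that the player with a winning strategy is the same in both games.

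The only real obstacle is Step 3: ruling out "shortcut" adjacencies in $G'$. In particular, orthogonal adjacencies in the big grid could arise between images of vertices that are not adjacent in $G$. The parity-sublattice observation handles this cleanly, and it is exactly the failure mode described for the diode gadget. A secondary point needs care: the eight-directional game must treat unused squares as genuinely removed, so that no path through them affects reachability. We address this by defining $G'$ to contain only $V'$, as in the construction.
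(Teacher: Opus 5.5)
Your proposal is correct and uses the same two geometric facts as the paper. Your Step 2 is the paper's Lemma 6, which sends the orthogonal neighbors of $v$ to the four diagonal neighbors of $T(v)$ and yields the move map $T_m$. Your parity-sublattice argument in Step 3 is the paper's Lemma 5: coordinates of $T(v)$ sum to $2x$, so no two points of $V'$ are orthogonally adjacent.

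The difference is in the final step. The paper uses strategy copying: the player who wins the $4$-directional game plays each winning move through $T_m$, pulls each opponent response back by $T_m^{-1}$, and repeats until a goal is reached. You instead combine the two facts with injectivity of $T$ and the fact that $G$ is an induced subgraph of the $4$-neighbor grid. This gives a color- and token-preserving isomorphism $G\cong G'$, so the two games are literally the same game.

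Your packaging makes explicit two things the paper leaves implicit:
\begin{itemize}
\item Every diagonal move in $G'$ pulls back to an actual edge of $G$. The paper's ``apply the inverse transformation'' step presumes this.
\item The reachability constraint and the terminal conditions transfer automatically under the isomorphism. Legal moves therefore correspond in both directions without a separate check, which the paper's argument also needs but never addresses.
\end{itemize}
The paper's version is more operational, since it tells the winner concretely how to play on $G'$. Underneath, though, it is the standard proof that isomorphic games have the same winner, specialized to this particular $T$.
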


We'll complete the proof of this theorem after creating two helper lemmas.

\begin{lemma}
    Only diagonal moves are possible on $G'$.
    \end{lemma}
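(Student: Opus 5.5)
The lemma says that in $G'$ no two nodes of $V'$ sit orthogonally adjacent, so the only edges between unslimed squares are diagonal. My plan is a parity argument on the image of $T$, followed by a check that every original edge becomes a diagonal one.

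First, I compute the coordinate sum of an image point. For $v=(x,y)\in V$ we have $T(v)=(x-y,\,x+y)$, so its coordinate sum is $(x-y)+(x+y)=2x$, which is even. Hence every node of $V'$ lies on the sublattice $L=\{(a,b)\in\mathbb{Z}^2 : a+b \equiv 0 \pmod 2\}$.

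Second, I show no orthogonal move is available. An orthogonal move changes exactly one coordinate by $\pm1$, so it changes $a+b$ by $\pm 1$ and takes a point of $L$ to a point outside $L$. Every such square is not in $V'$; by construction it is one of the pre-slimed vertices and cannot be entered. So from any token position in $V'$, all orthogonally adjacent squares are unavailable, and the legal moves must be diagonal.

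Third, I check that diagonal edges of $G'$ correspond exactly to orthogonal edges of $G$. A diagonal step changes the position by $(\pm1,\pm1)$. Inverting $T$ gives $x=(a+b)/2$ and $y=(b-a)/2$. Applying this:
\begin{itemize}
\item the steps $(1,1)$ and $(-1,-1)$ become $(\pm1,0)$ in $G$;
\item the steps $(1,-1)$ and $(-1,1)$ become $(0,\mp1)$ in $G$.
\end{itemize}
So diagonal adjacency in $G'$ is precisely cardinal adjacency in $G$. In particular diagonal moves do exist, and $T$ is an isomorphism from the cardinal graph on $V$ to the graph $G'$ restricted to $V'$. This part really belongs to the theorem, but it costs one line here.

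The main obstacle is the modelling point rather than the algebra. I must state clearly that the squares of $\mathbb{Z}^2\setminus V'$ in the bounding rectangle are pre-slimed, so they act as absent vertices for both movement and the reachability constraint. With that convention, the parity observation in the second step is the whole proof.
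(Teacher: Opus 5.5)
Your proposal is correct and is essentially the paper's proof: an image point has coordinate sum $2x$, which is even, and an orthogonal step changes that sum by an odd amount, so every orthogonal neighbor of a node of $V'$ is a pre-slimed square. Your third step, which inverts $T$ to match diagonal adjacency in $G'$ with cardinal adjacency in $G$, is not needed for this lemma; the paper handles that correspondence separately in its next lemma.
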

\begin{proof}
    Take any node $v=(x,y)$ in $V$. Its new position is $T(v)=(x-y,\ x+y)$. Notice 
    that the two coordinates of $T(v)$ always add up to an even number, since 
    $(x-y)+(x+y)=2x$. A straight (horizontal or vertical) move changes only one 
    coordinate by 1, which changes this sum by an odd number. Since they aren't a part of the original map, they must be a pre-slimed vertex. So a straight move can  never connect two points that both have an even coordinate sum. This means no two nodes in $V'$ can be connected by a straight move. In other words, only diagonal moves are 
    possible in $G'$.
\end{proof}
\begin{lemma}
    For $v = (x,y) \in G$, with $v_u = (x, y + 1)$, $v_r = (x+1, y)$, $v_d = (x, y-1)$, and $v_l = (x-1, y)$ being the adjacent vertices, $T(v)$ is below and to the right of $T(v_u)$, below and to the left of $T(v_r)$, above and to the left of $T(v_d)$, and above and to the right of $T(v_\ell)$ in $G'$
\end{lemma}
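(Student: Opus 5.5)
The proof is a direct coordinate computation: apply $T(x,y) = (x-y,\ x+y)$ to $v$ and to each of its four orthogonal neighbours, then compare coordinates. Throughout, ``above'' means a larger second coordinate and ``right'' means a larger first coordinate, matching the convention of Figure~\ref{fig:game-example}. Since $T$ is linear, $T(w)-T(v) = T(w-v)$, so only the four images $T(0,1)$, $T(1,0)$, $T(0,-1)$ and $T(-1,0)$ are needed.

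The four cases, in the order of the statement, are as follows.
\begin{itemize}
\item For $v_u$: $T(v_u) = (x-y-1,\ x+y+1)$, so $T(v) - T(v_u) = (1,-1)$. Hence $T(v)$ is one step right of and one step below $T(v_u)$.
\item For $v_r$: $T(v_r) = (x-y+1,\ x+y+1)$, so $T(v) - T(v_r) = (-1,-1)$. Hence $T(v)$ is below and to the left of $T(v_r)$.
\item For $v_d$: $T(v_d) = (x-y+1,\ x+y-1)$, so $T(v) - T(v_d) = (-1,1)$. Hence $T(v)$ is above and to the left of $T(v_d)$.
\item For $v_\ell$: $T(v_\ell) = (x-y-1,\ x+y-1)$, so $T(v) - T(v_\ell) = (1,1)$. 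Hence $T(v)$ is above and to the right of $T(v_\ell)$.
\end{itemize}

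In every case both coordinates differ by exactly $1$. So the image of each orthogonal edge of $G$ is a single diagonal step, and it is an edge of $G'$ under eight-directional adjacency.

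For later use in the theorem, I would also record the converse, since this is the point that actually needs care. The map $T$ is injective, with inverse $T^{-1}(a,b) = \bigl(\tfrac{a+b}{2},\ \tfrac{b-a}{2}\bigr)$ on points with $a+b$ even. Any diagonal step $(\pm1,\pm1)$ between two points of $V'$ pulls back under $T^{-1}$ to one of $(0,\pm1)$ or $(\pm1,0)$, that is, to an orthogonal step in $G$. Together with the preceding lemma, which rules out straight moves in $G'$, this shows that $T$ is a graph isomorphism from $G$ onto $G'$.

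No step here is a real obstacle; the only risk is a sign error in the orientation conventions, which the explicit difference vectors above settle.
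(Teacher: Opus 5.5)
Your proof is correct and follows the paper's argument: compute $T$ of $v$ and of its four orthogonal neighbours and read off the relative positions, with your explicit difference vectors making the sign conventions transparent. The converse you record---that any diagonal step between two points of $V'$ pulls back under $T^{-1}$ to an orthogonal step of $G$, so $T$ is a graph isomorphism onto $G'$---is not part of the paper's lemma, but the subsequent theorem tacitly relies on it, so it is a worthwhile addition.
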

\begin{proof}
     We will refer to $T(v)$ as $v'$, $T(v_u)$ as $v'_u$, $T(v_r)$ as $v'_r$, $T(v_d)$ as $v'_d$, and $T(v_\ell)$ as $v'_l$. We get coordinates $v' = (x - y, x+ y)$, $v'_u = (x - y - 1, x + y + 1)$, $v'_r = (x - y + 1, x + y + 1)$, $v'_d = (x - y + 1, x + y - 1)$, and $v'_\ell = (x - y - 1, x + y - 1)$. This puts $v'_u$ as above and to the left of $v'$, $v'_r$ as above and to the right of $v'$, $v'_d$ as below and to the right of $v'$, and $v'_\ell$ and below and to the left of $v'$. As such, these edges are in $E$.
\end{proof}

\smallskip
Now we complete the proof of theorem 4:

\begin{proof}
The winning player on the 4 directional game has a winning strategy on the 8 directional game. By Lemma 6, we define a move transformation function $T_m$ which takes orthogonal movement and turns them into the corresponding digonal ones as follows:
$$T_m(1,0)=(1,1), \quad T_m(-1,0)=(-1,-1), \quad T_m(0,1)=(-1,1), \quad T_m(0,-1)=(1,-1).$$

Now, the winning player has the following strategy: take the winning move on the 4-directional game, and apply the transformation to get the move to the 8 directional one, and make that. Since $G'$ doesn't have any orthogonal moves (by Lemma 5), whatever the losing player's response is on the 8 directional one, we can apply the inverse transformation to map it to the 4 directional game, after which the winning player can again apply their winning response. This will continue to apply inductively, until a goal node is reached, which must be of the winning player's color.



\end{proof}

Therefore, the rotated and rescaled construction is a valid eight-directional 
grid game whose logical behavior is identical to the cardinal version, and 
the same PSPACE-completeness result holds. Thus, Slime Trail is 
PSPACE-complete on grids for both the four-directional and eight-directional variants.
\section{Conclusion}\label{sec:conclusion}

We have proved that Cardinal Grid Slime Trail is PSPACE-complete by adapting 
the QBF reduction described in \cite{ferland2017} to work within the geometric 
constraints of the integer lattice. Our grid-compatible gadgets demonstrate 
that the structural restrictions imposed by grid graphs do not reduce the 
computational complexity of Slime Trail.

The polynomial-time constructibility of the reduction follows from the fact 
that the entire construction has $O(m \cdot n)$ vertices and edges (as 
established in Section~\ref{sec:reduction}), and each gadget can be placed on 
the grid in time proportional to its size.

As shown in Section~\ref{sec:eight-directional}, the construction also extends, under a 45-degree rotation, to the eight-directional variant of the game. This resolves the open problem posed in \cite{ferland2017} and confirms that the game as actually played in 
competition, Slime Trail with eight-directional movement on a square grid as 
featured in the Portuguese National Mathematical Games Championship (CNJM) 
organized by Ludus \cite{ludus_cnjm, burke2017blog}, is computationally 
intractable.

\subsection{Future Work}

Several questions remain open.

\begin{enumerate}
    \item Does PSPACE-completeness extend to hexagonal grids, which have 
    different parity and degree properties?
    
    \item How does the complexity change if each player has only a single goal 
    node rather than multiple?
    
    \item Can the grid structure be exploited to develop exact algorithms that 
    perform better than worst-case exponential time for small instances, despite 
    PSPACE-completeness in general?
    
    \item Are there natural subclasses of Cardinal Grid Slime Trail instances---for 
    example, those with bounded treewidth---that admit efficient solutions?
\end{enumerate}

\appendix
\section{Complete Crossover Gadget Case Analysis}

This appendix comprises 18 illustrations depicting a comprehensive case analysis of all 18 instances in which an illegal move is initiated within the crossover gadget. For each such deviation, we demonstrate that the opposing player can either force a win or steer the game back onto its correct trajectory.

\begin{figure}[p]
    \centering
    \includegraphics[width=1\textwidth]{cross-over-cases/1-blue-cheat-epsilon8-lambda48.JPG}
    \caption{Case 1: Blue deviating at $\varepsilon_8$ by moving to $\lambda_{48}$.}
    \label{fig:appendix-case-1}
\end{figure}

\begin{figure}[p]
    \centering
    \includegraphics[width=1\textwidth]{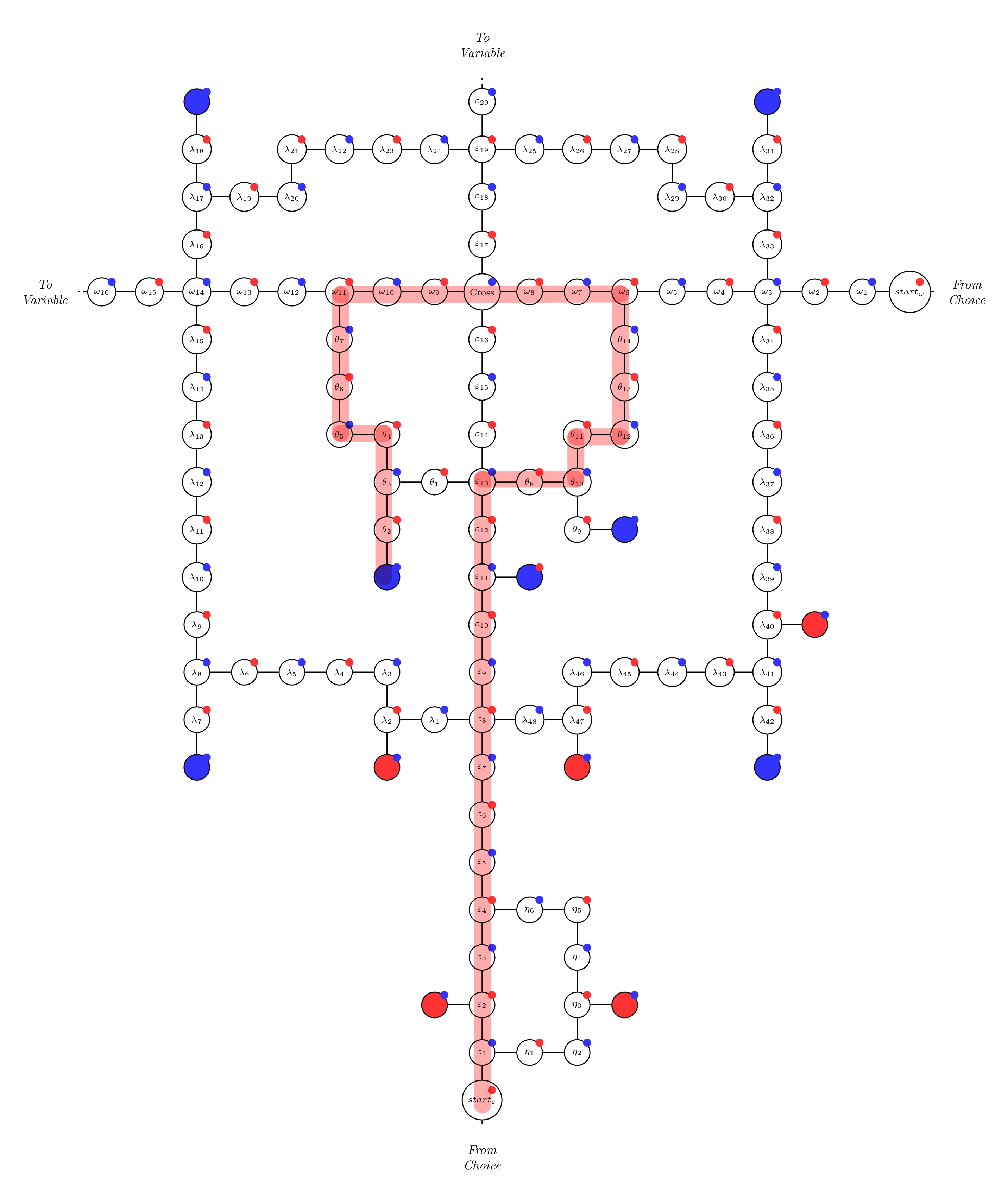}
    \caption{Case 2: Red deviating at $\varepsilon_{13}$ by moving to $\theta_8$.}
    \label{fig:appendix-case-2}
\end{figure}

\begin{figure}[p]
    \centering
    \includegraphics[width=1\textwidth]{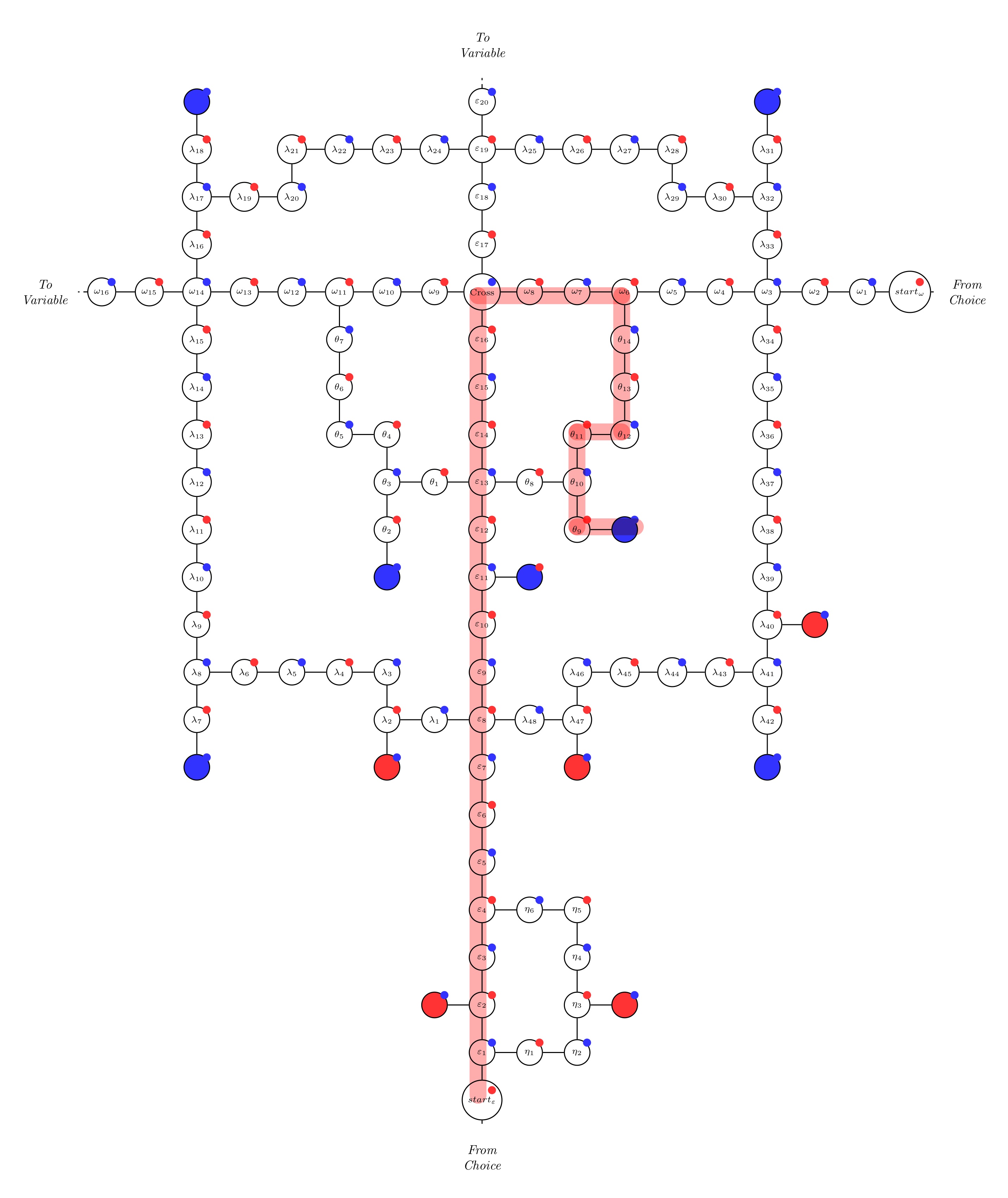}
    \caption{Case 3: Red deviating at Cross by moving to $\omega_8$.}
    \label{fig:appendix-case-3}
\end{figure}

\begin{figure}[p]
    \centering
    \includegraphics[width=1\textwidth]{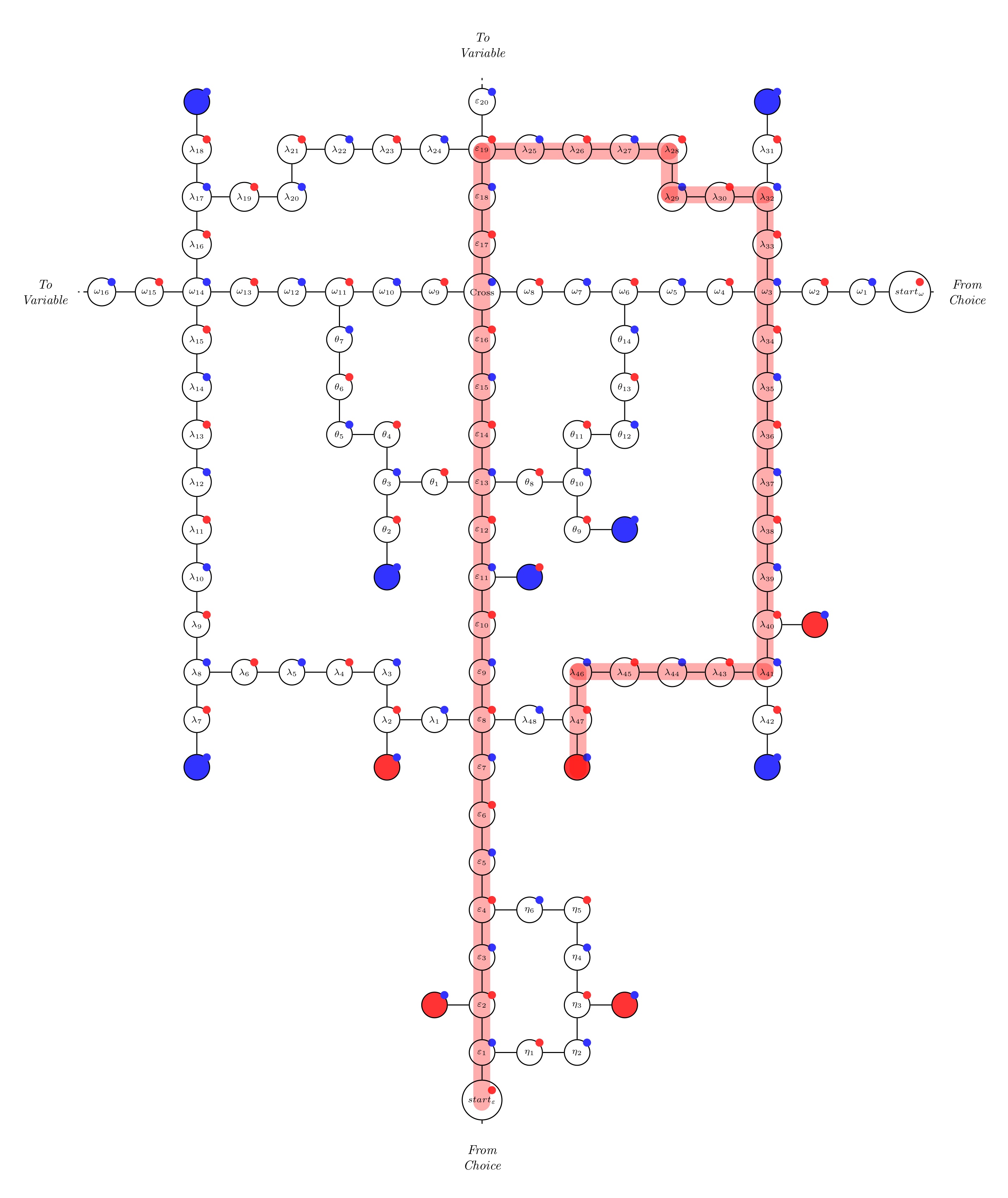}
    \caption{Case 4: Blue deviating at $\varepsilon_{19}$ by moving to $\lambda_{25}$.}
    \label{fig:appendix-case-4}
\end{figure}

\begin{figure}[p]
    \centering
    \includegraphics[width=1\textwidth]{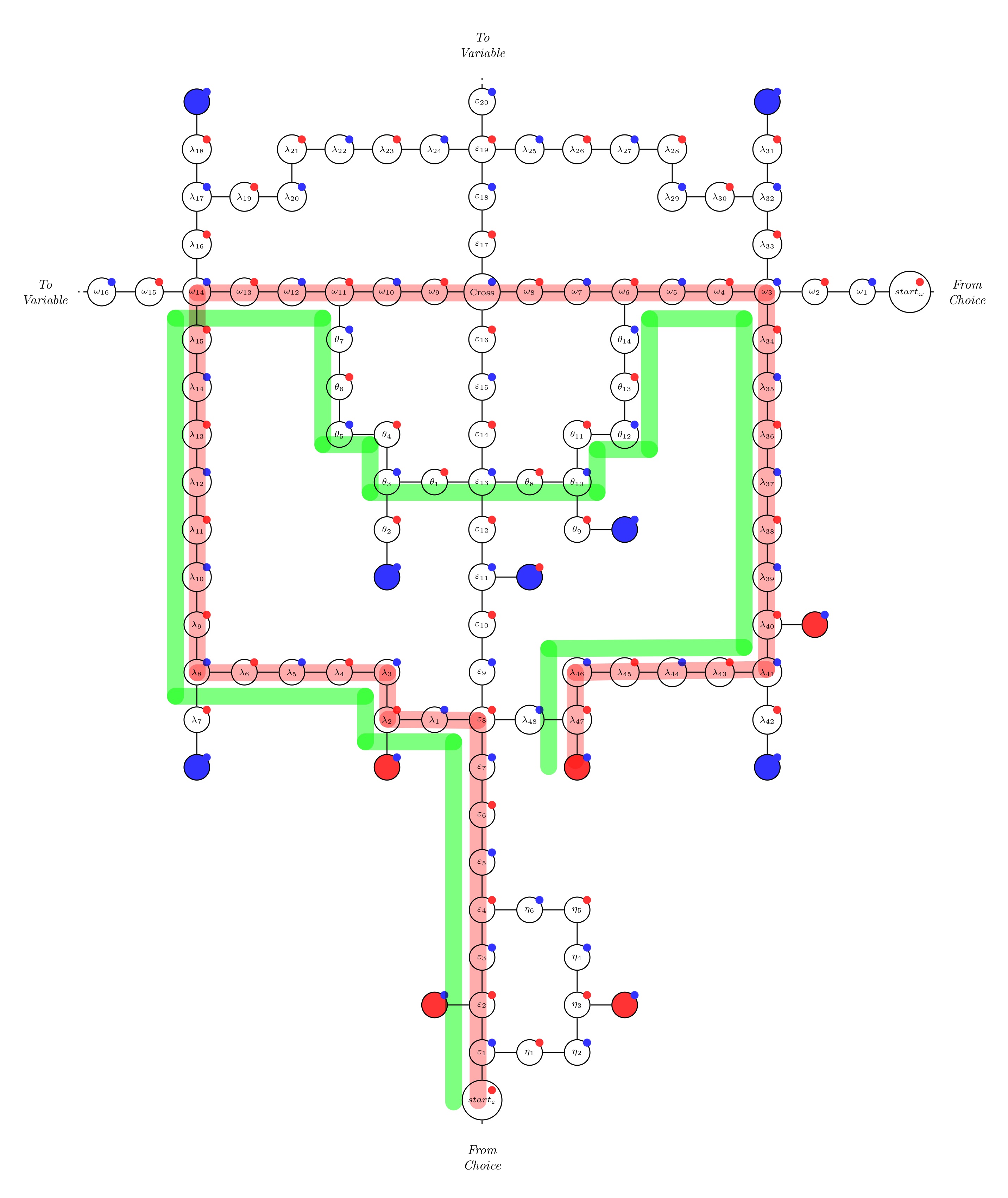}
    \caption{Case 5: Blue deviating at $\varepsilon_8$ by moving to $\lambda_1$.}
    \label{fig:appendix-case-5}
\end{figure}

\begin{figure}[p]
    \centering
    \includegraphics[width=1\textwidth]{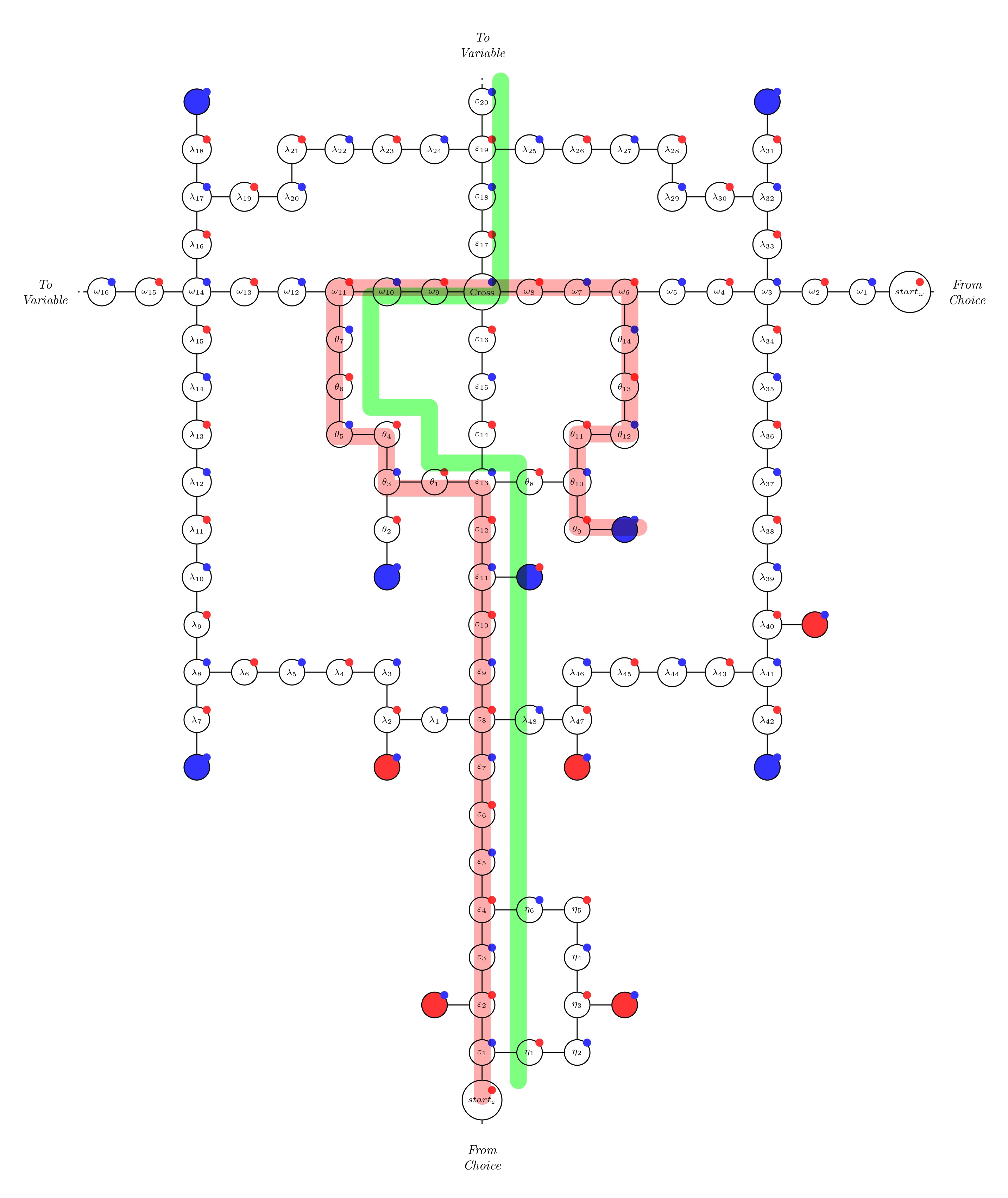}
    \caption{Case 6: Red deviating at $\varepsilon_{13}$ by moving to $\theta_1$.}
    \label{fig:appendix-case-6}
\end{figure}

\begin{figure}[p]
    \centering
    \includegraphics[width=1\textwidth]{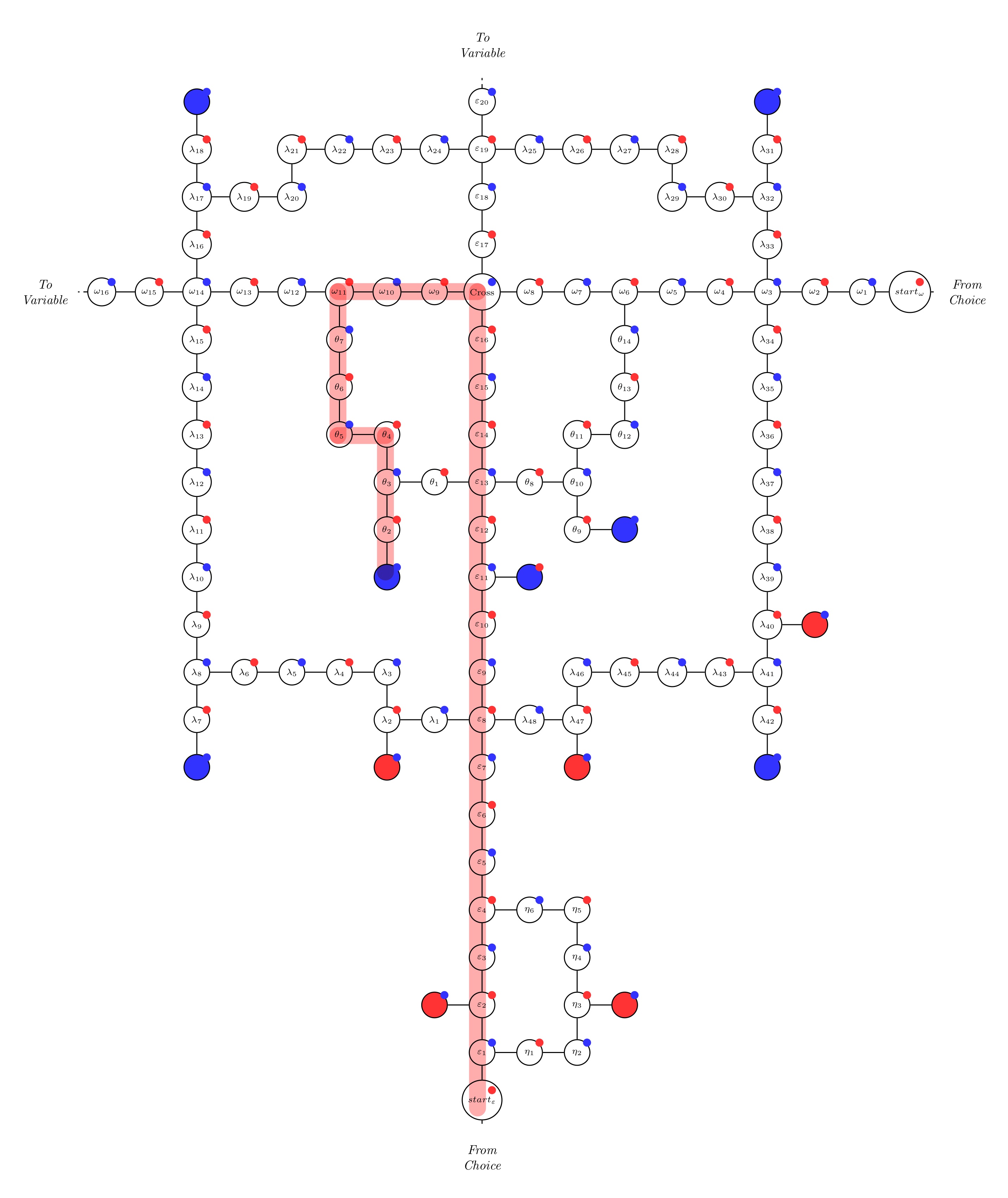}
    \caption{Case 7: Red deviating at Cross by moving to $\omega_9$.}
    \label{fig:appendix-case-7}
\end{figure}

\begin{figure}[p]
    \centering
    \includegraphics[width=1\textwidth]{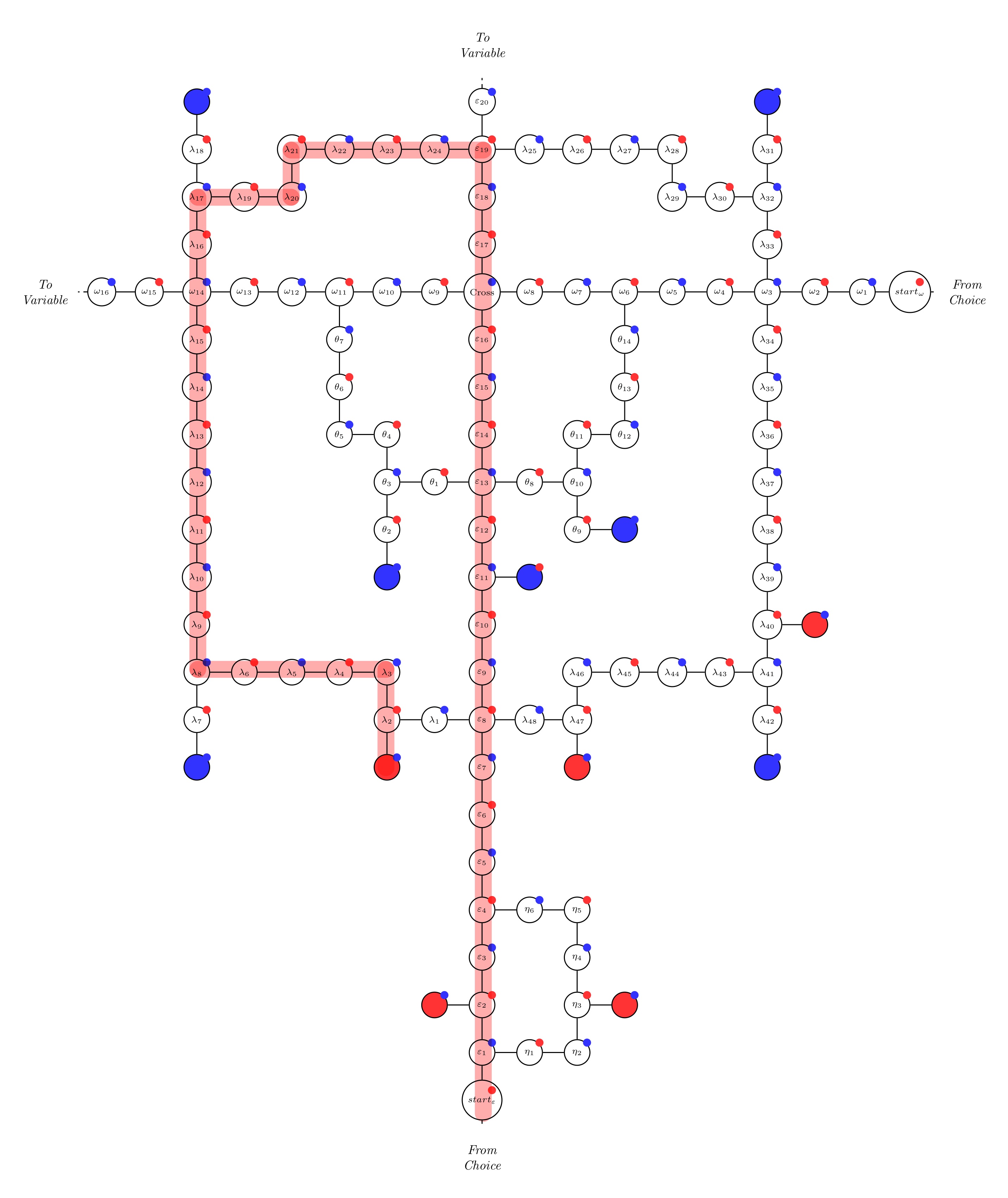}
    \caption{Case 8: Blue deviating at $\varepsilon_{19}$ by moving to $\lambda_{24}$.}
    \label{fig:appendix-case-8}
\end{figure}

\begin{figure}[p]
    \centering
    \includegraphics[width=1\textwidth]{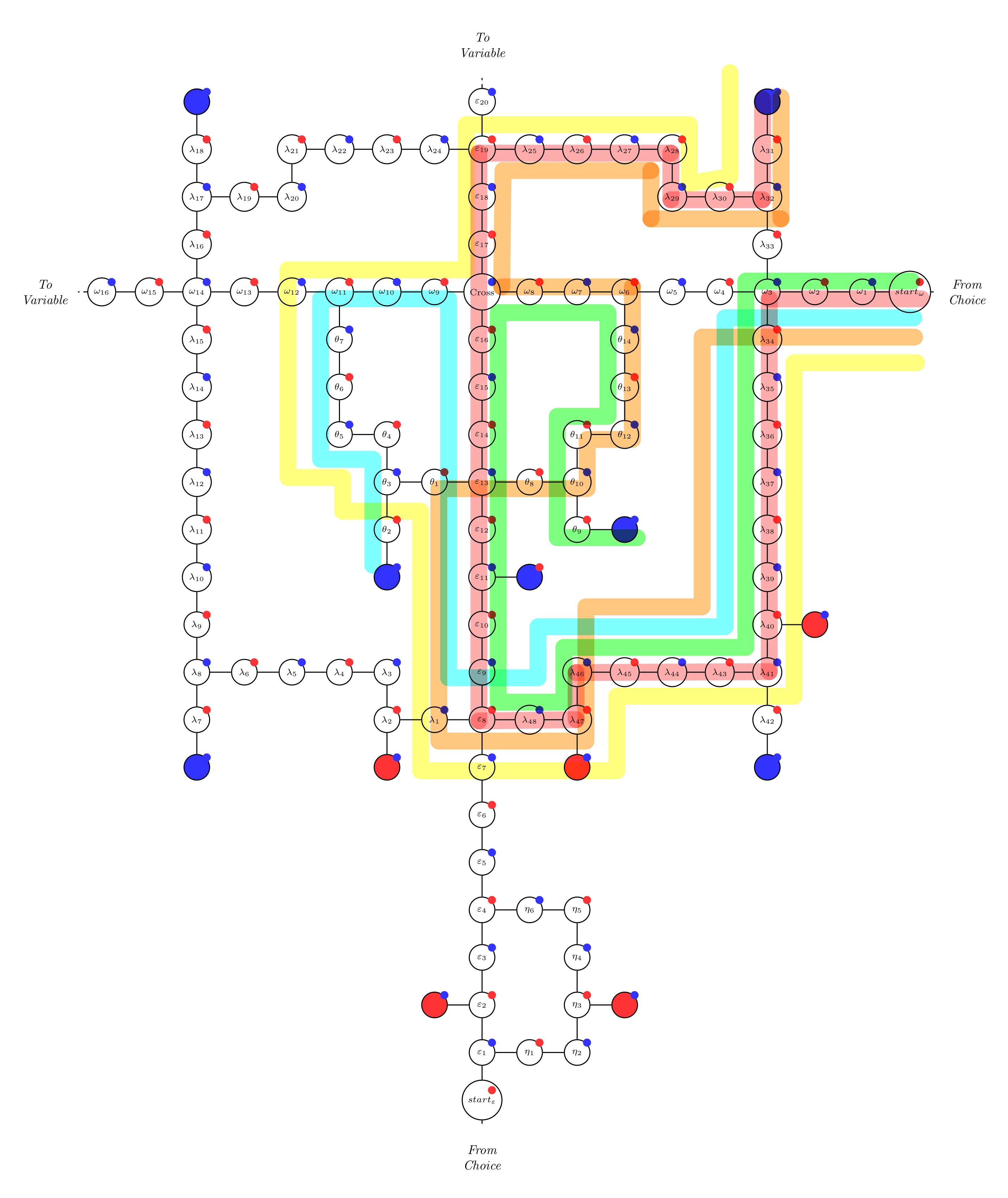}
    \caption{Case 9: Red deviating at $\omega_3$ by moving to $\lambda_{34}$.}
    \label{fig:appendix-case-9}
\end{figure}

\begin{figure}[p]
    \centering
    \includegraphics[width=1\textwidth]{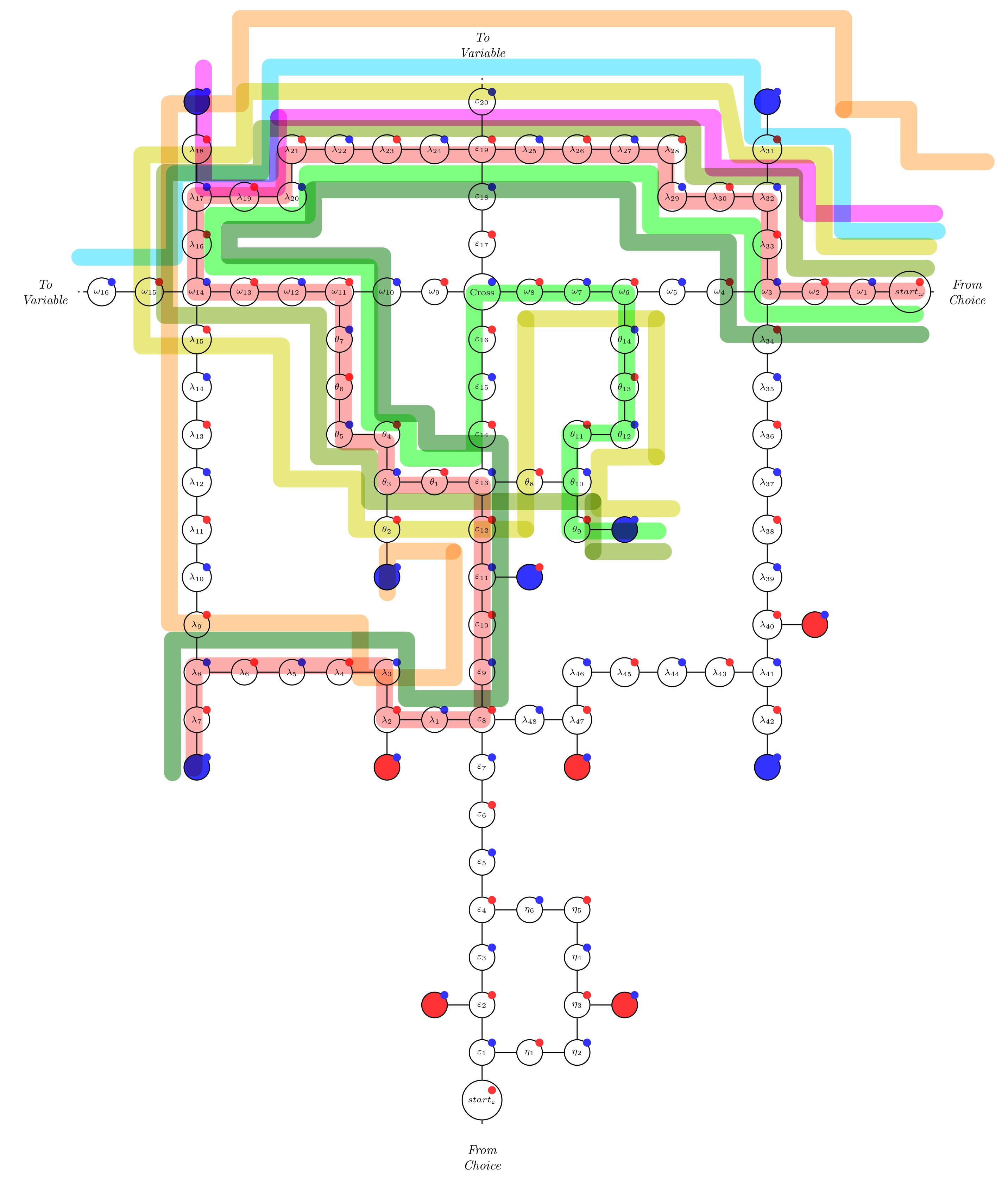}
    \caption{Case 10: Red deviating at $\omega_3$ by moving to $\lambda_{33}$.}
    \label{fig:appendix-case-10}
\end{figure}

\begin{figure}[p]
    \centering
    \includegraphics[width=1\textwidth]{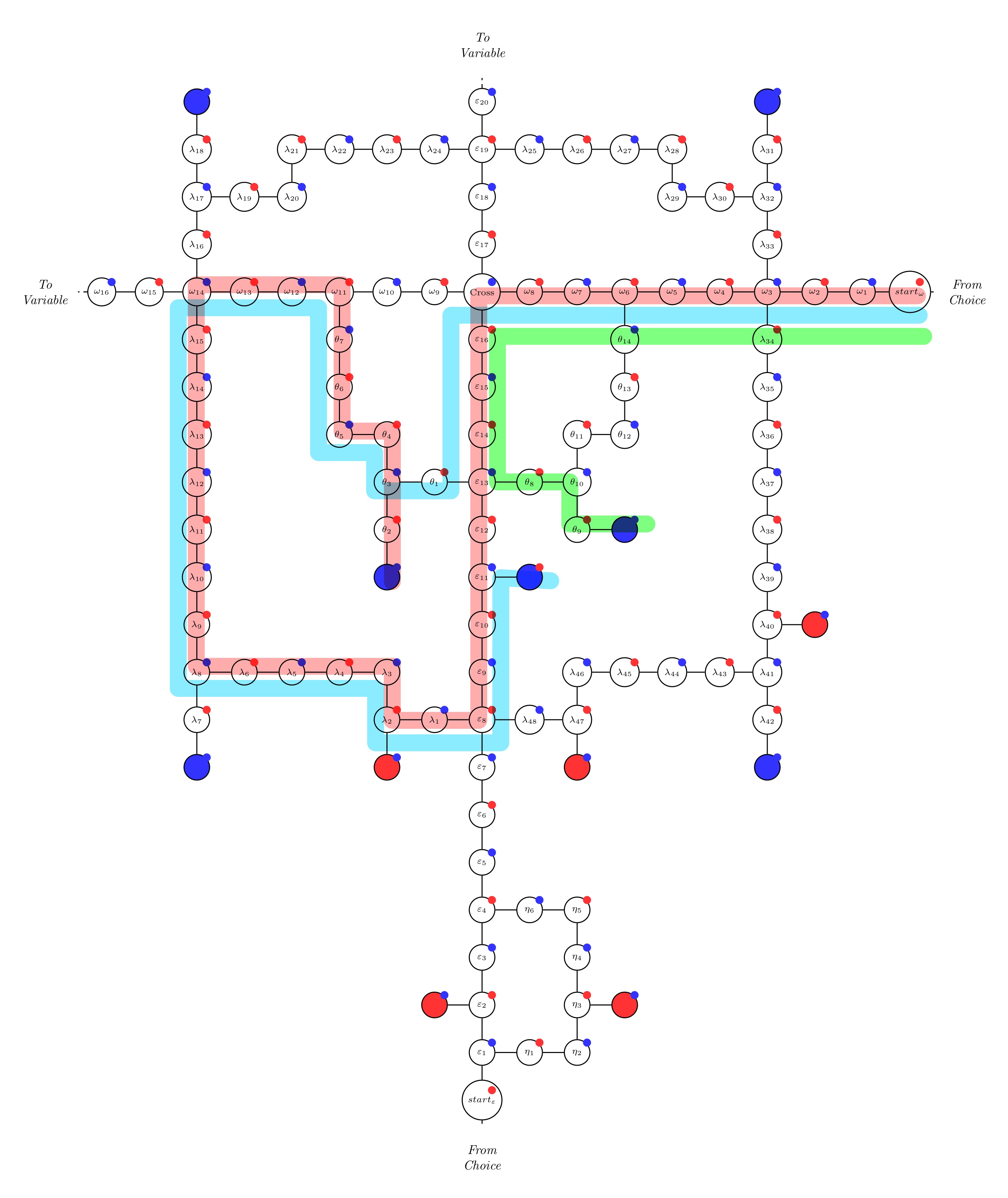}
    \caption{Case 11: Red deviating at Cross by moving to $\varepsilon_{16}$.}
    \label{fig:appendix-case-11}
\end{figure}

\begin{figure}[p]
    \centering
    \includegraphics[width=1\textwidth]{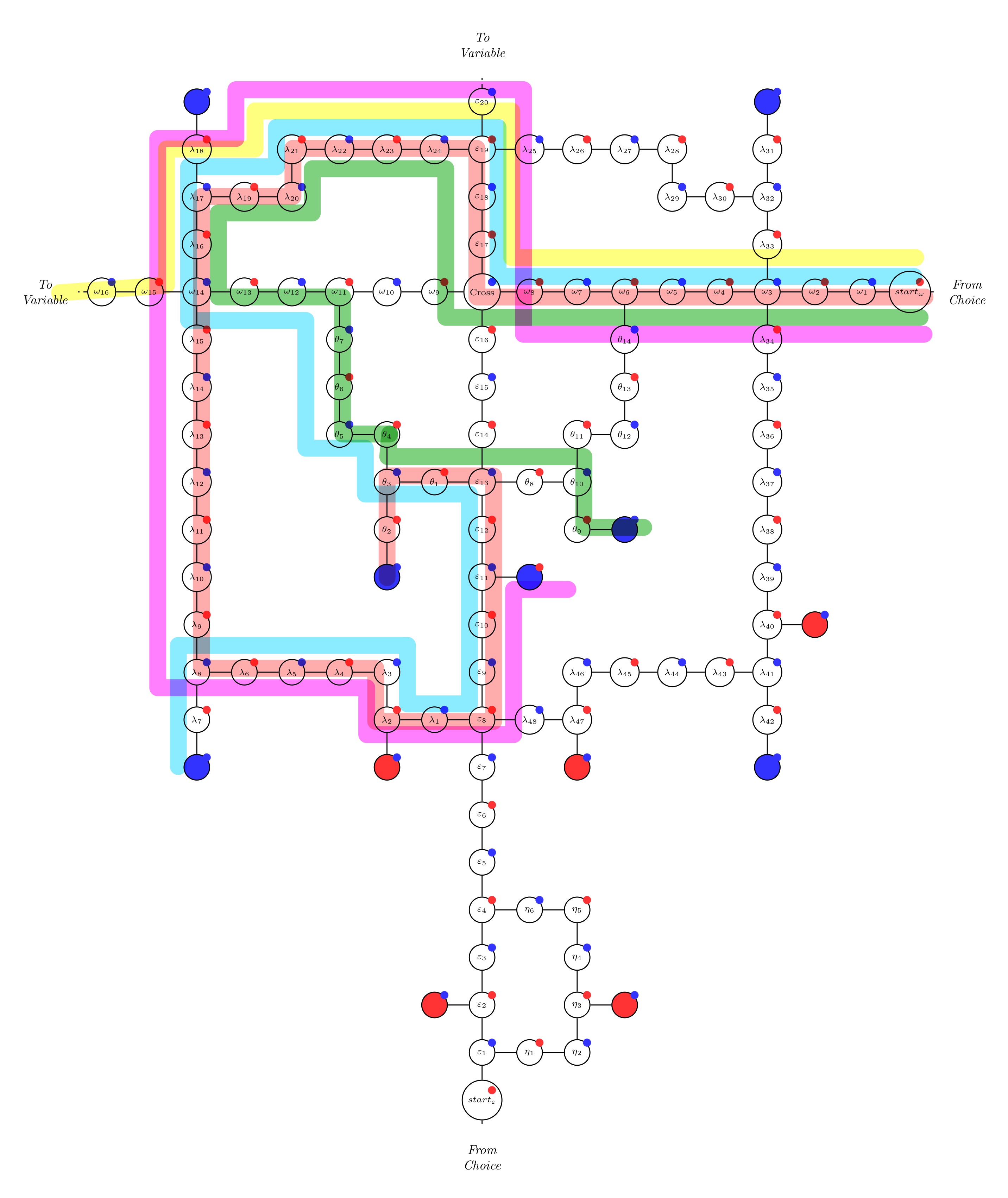}
    \caption{Case 12: Red deviating at Cross by moving to $\varepsilon_{17}$.}
    \label{fig:appendix-case-12}
\end{figure}

\begin{figure}[p]
    \centering
    \includegraphics[width=1\textwidth]{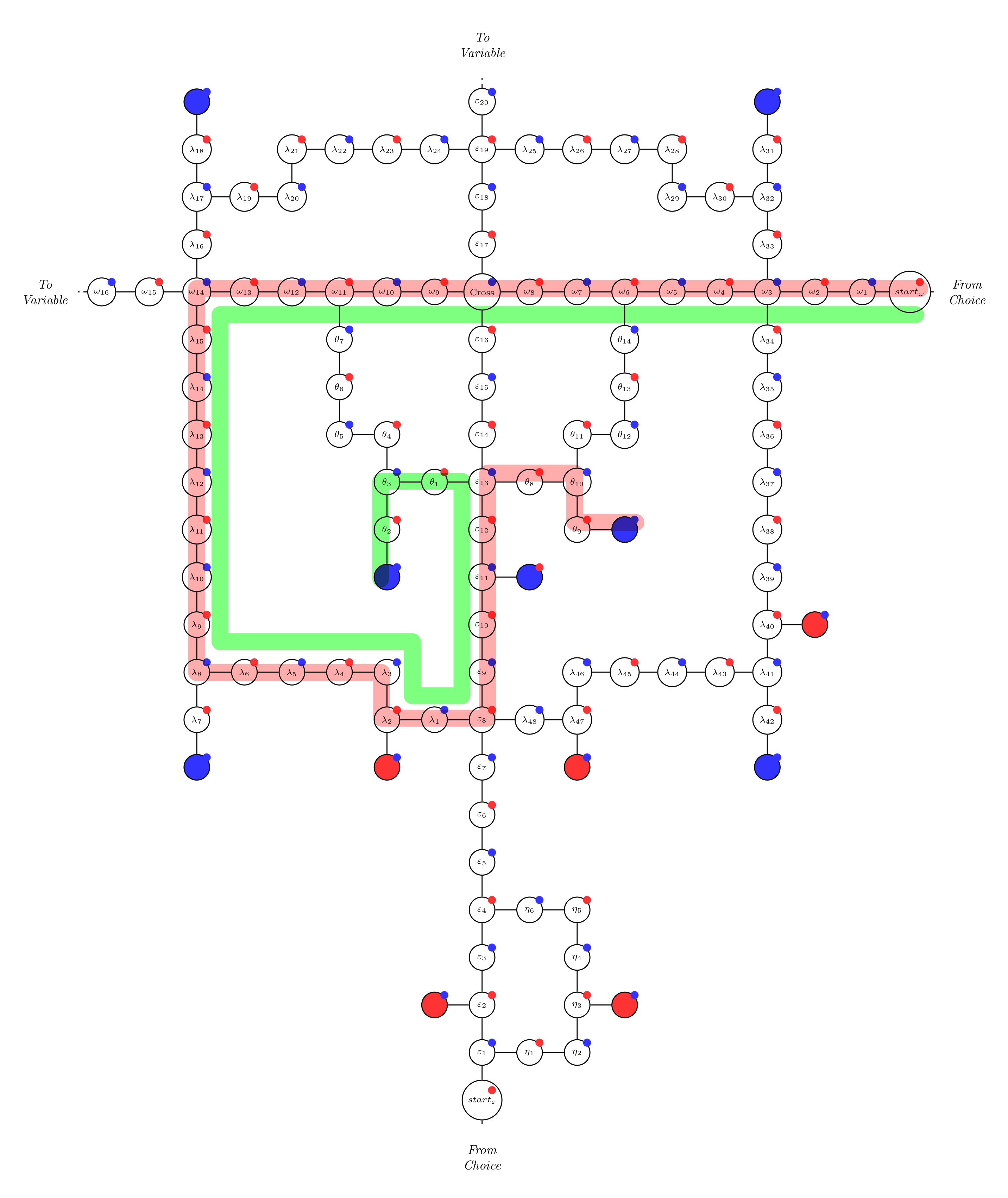}
    \caption{Case 13: Red deviating at $\omega_{14}$ by moving to $\lambda_{15}$.}
    \label{fig:appendix-case-13}
\end{figure}

\begin{figure}[p]
    \centering
    \includegraphics[width=1\textwidth]{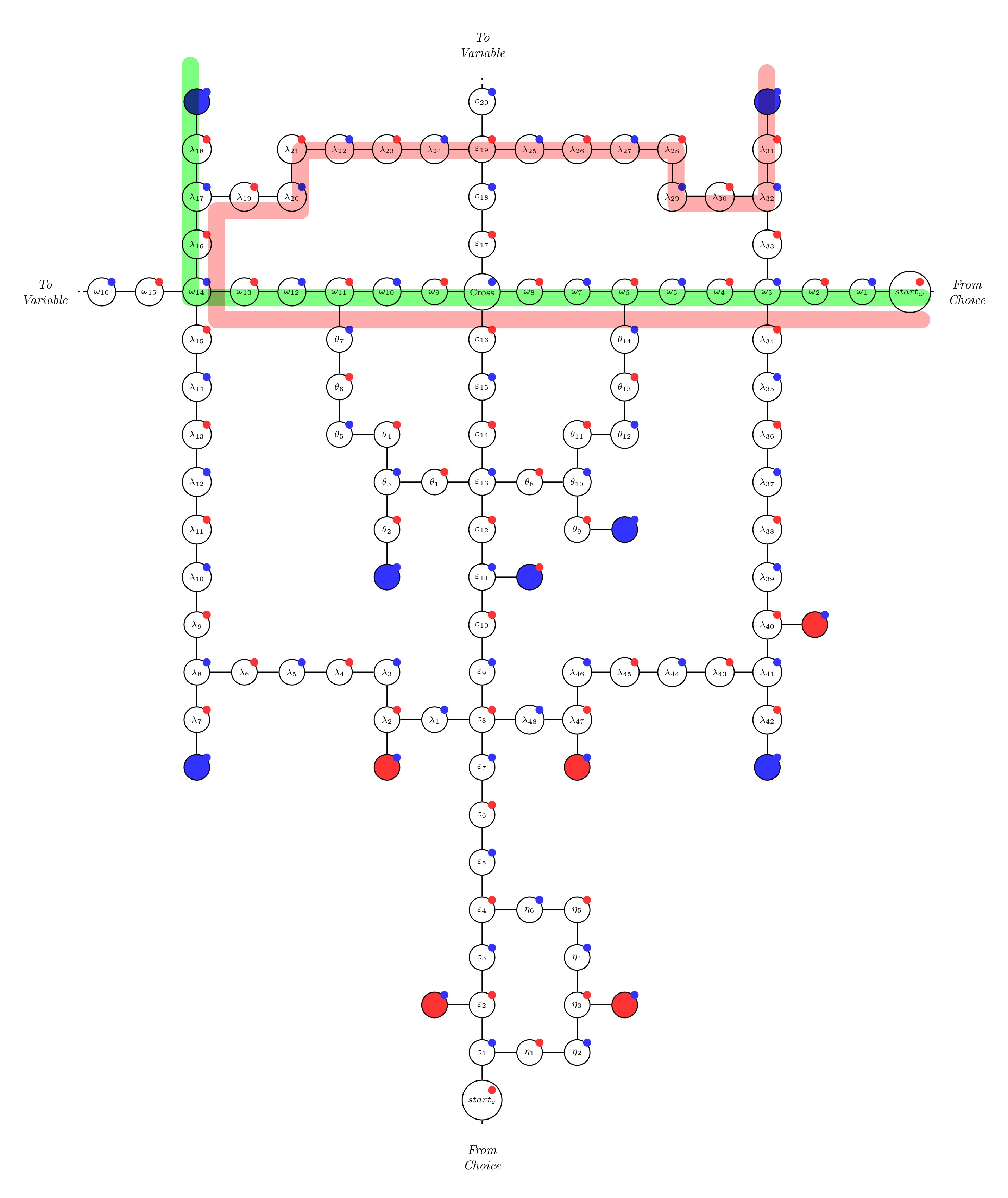}
    \caption{Case 14: Red deviating at $\omega_{14}$ by moving to $\lambda_{16}$.}
    \label{fig:appendix-case-14}
\end{figure}

\begin{figure}[p]
    \centering
    \includegraphics[width=1\textwidth]{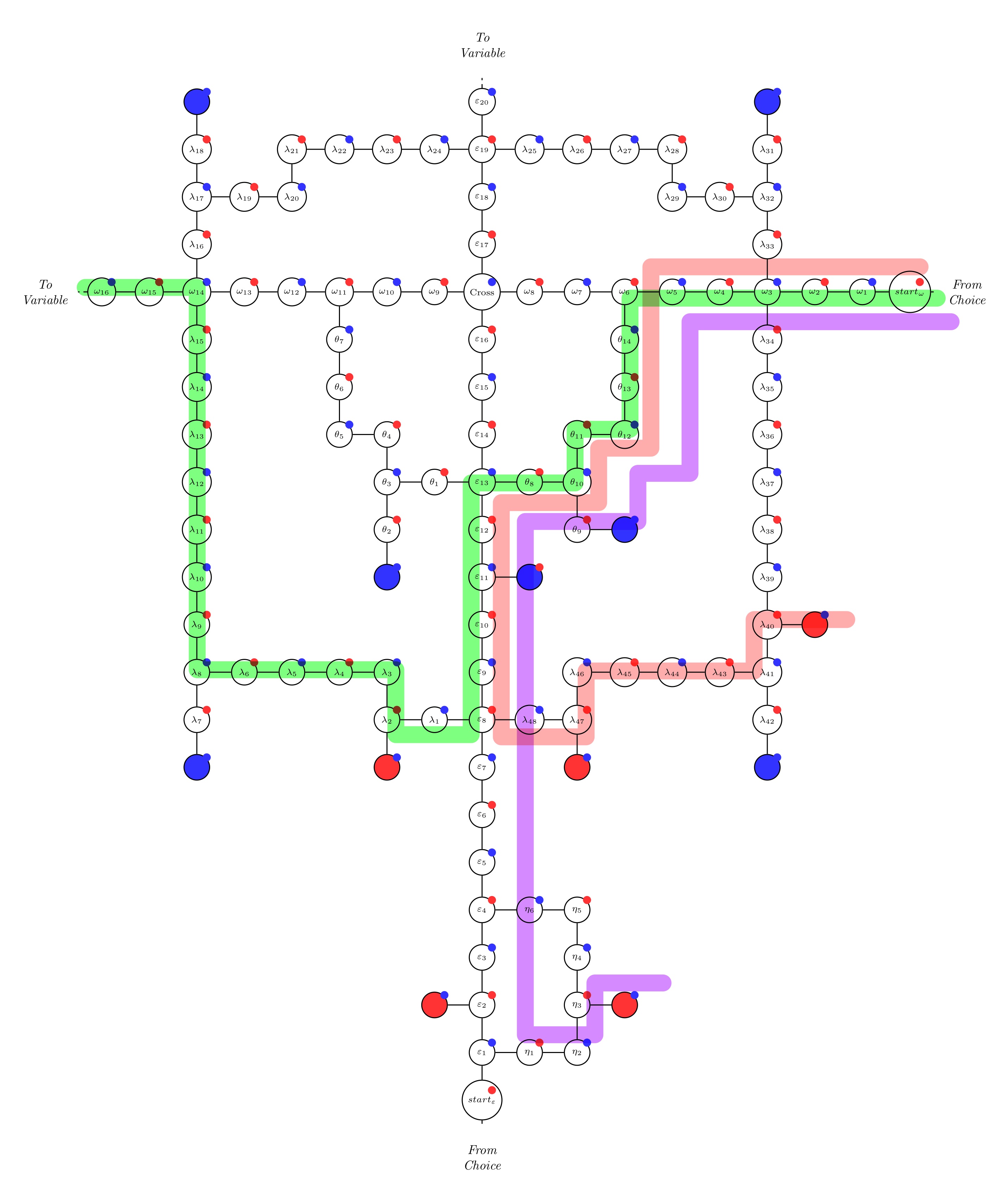}
    \caption{Case 15: Blue deviating at $\omega_6$ by moving to $\theta_{14}$.}
    \label{fig:appendix-case-15}
\end{figure}

\begin{figure}[p]
    \centering
    \includegraphics[width=1\textwidth]{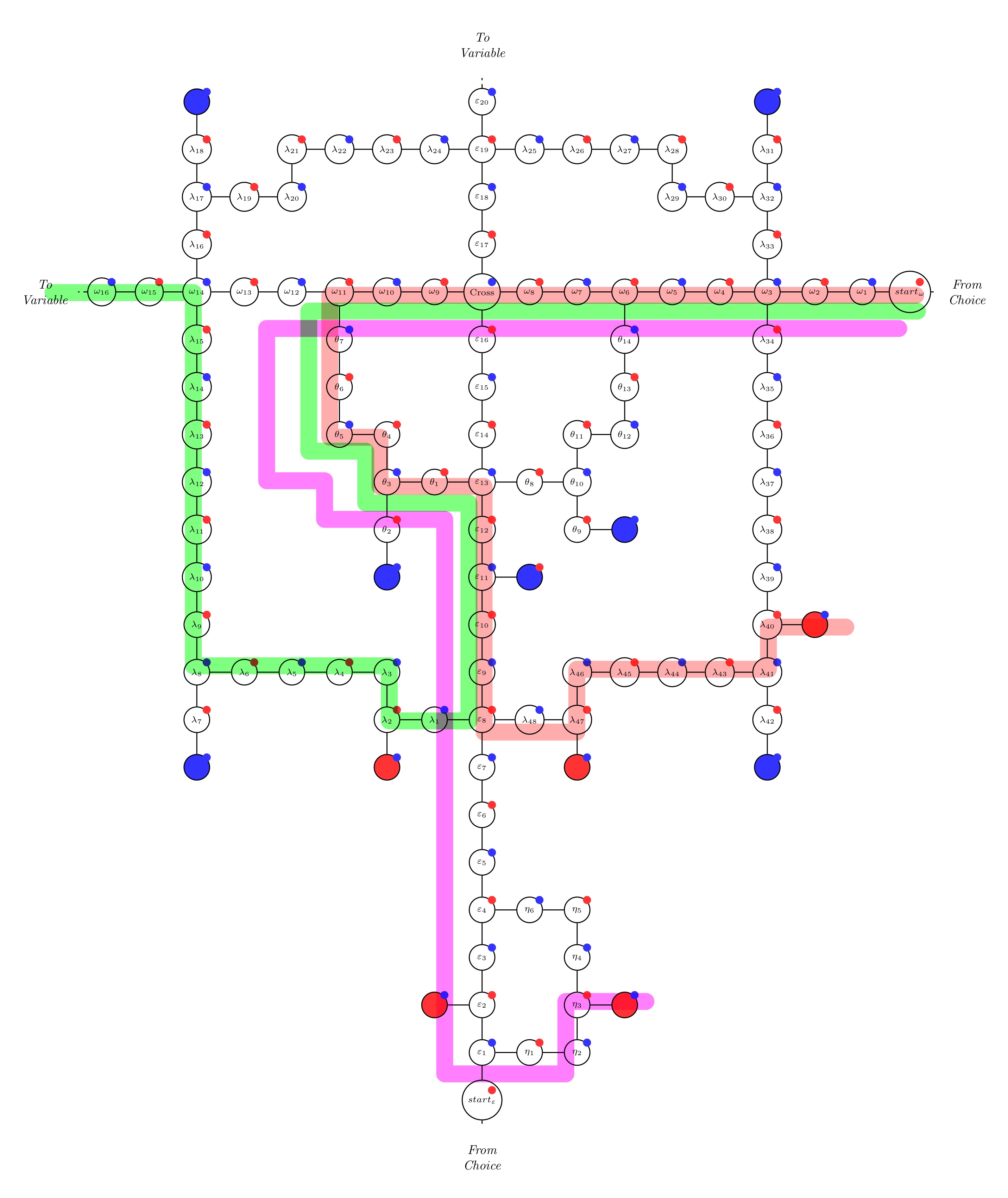}
    \caption{Case 16: Blue deviating at $\omega_{11}$ by moving to $\theta_7$.}
    \label{fig:appendix-case-16}
\end{figure}

\begin{figure}[p]
    \centering
    \includegraphics[width=1\textwidth]{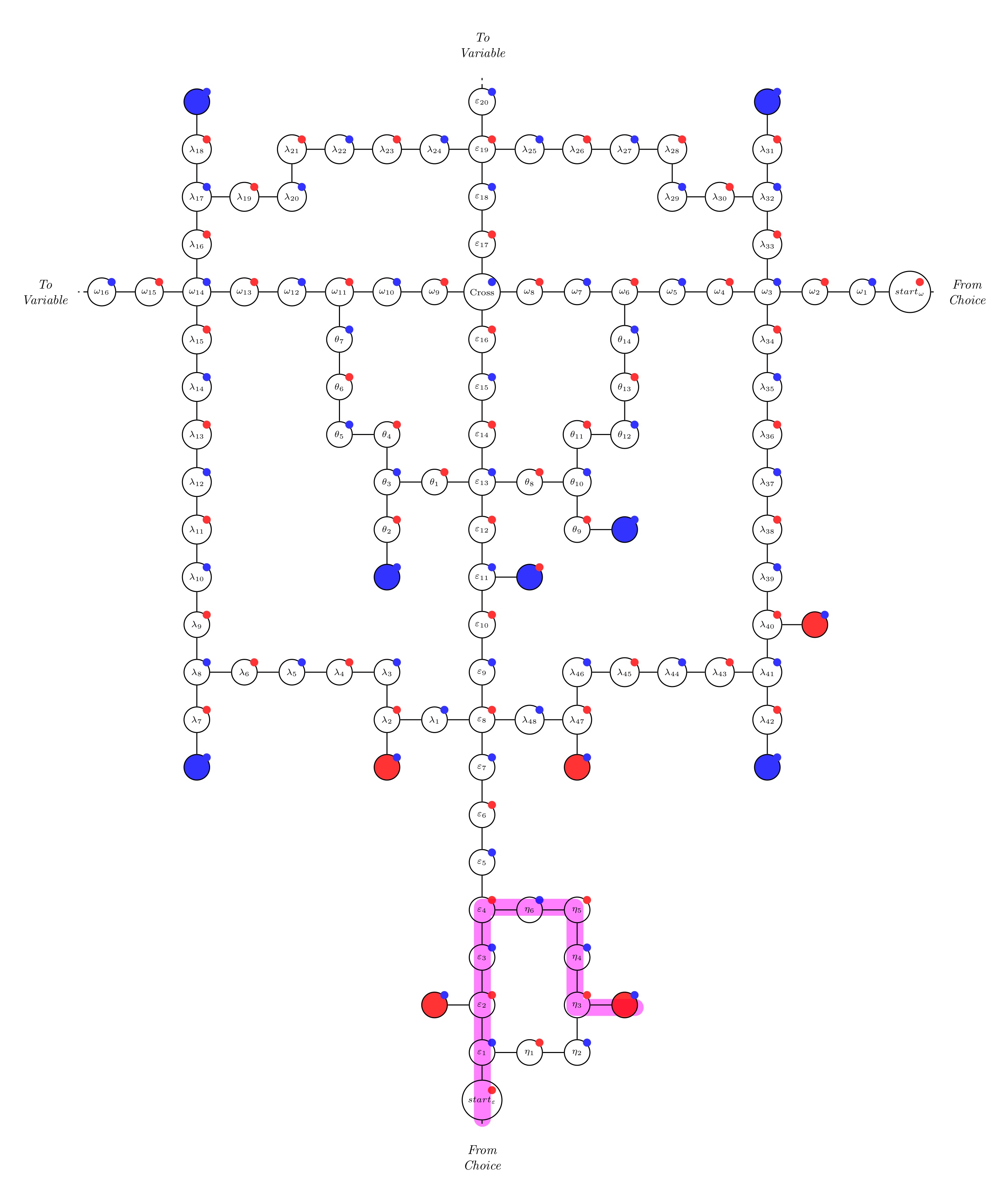}
    \caption{Case 17: Blue deviating at $\varepsilon_4$ by moving to $\eta_6$.}
    \label{fig:appendix-case-17}
\end{figure}

\begin{figure}[p]
    \centering
    \includegraphics[width=1\textwidth]{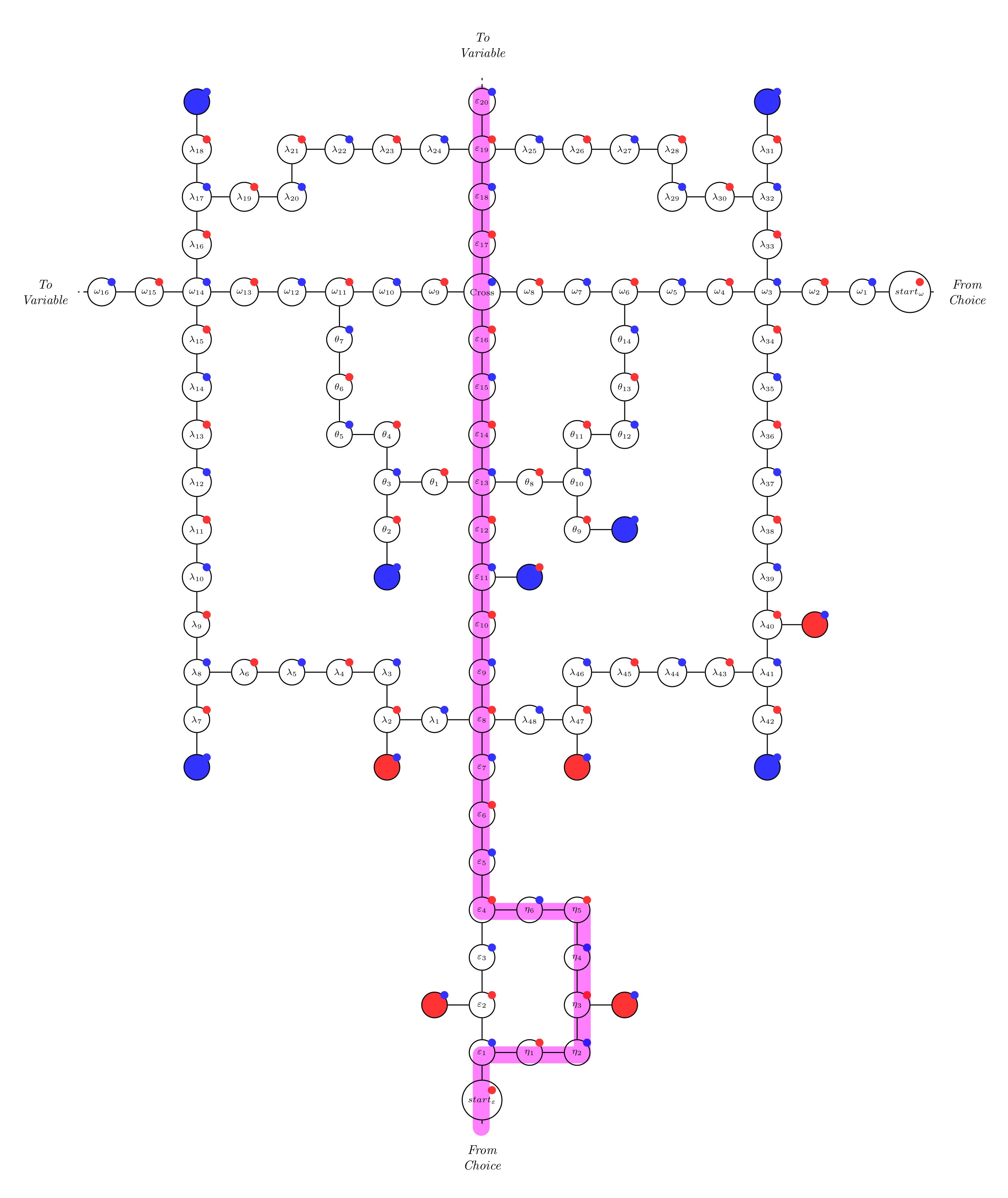}
    \caption{Case 18: Red deviating at $\varepsilon_1$ by moving to $\eta_1$.}
    \label{fig:appendix-case-18}
\end{figure}

\clearpage
\bibliographystyle{plain}

\end{document}